\documentclass{article}

\usepackage[margin=1.2in]{geometry}

\usepackage[utf8]{inputenc} 
\usepackage[T1]{fontenc}    
\usepackage{hyperref}       
\usepackage{url}             
\usepackage{booktabs}       
\usepackage{amsfonts}       
\usepackage{nicefrac}       
\usepackage{microtype}      
\usepackage{lipsum}
\usepackage[round]{natbib}
\usepackage{wrapfig}
\usepackage{tikz}
\usetikzlibrary{shapes,arrows}
\usepackage{booktabs,threeparttable}

\newcommand{\newlab}[2]{#2\def\@currentlabel{#2}\label{#1}}
\usepackage{array}

\newcolumntype{L}[1]{>{\raggedleft\let\newline\\\arraybackslash\hspace{0pt}}m{#1}}
\newcolumntype{H}{>{\setbox0=\hbox\bgroup}c<{\egroup}@{}}
\usepackage{xspace}
\newcommand{\gmm}{$\text{AdaPT-GMM}_g$\xspace}
\newcommand{\FDR}{\text{FDR}}

\newcommand{\adaptg}{$\text{AdaPT}_g$\xspace}
\newcommand{\cM}{\mathcal{M}}
\newcommand{\cN}{\mathcal{N}}
\newcommand{\cH}{\mathcal{H}}
\newcommand{\setcomp}{\mathsf{c}}

\newcommand{\hq}{\hat{q}}
\newcommand{\heta}{\hat{\eta}}
\newcommand{\hFDP}{\widehat{\text{FDP}}}
\newcommand{\lambdamin}{\lambda}
\newcommand{\lambdamax}{\nu}
\newcommand{\alphamax}{\alpha_m}

\newcommand{\E}{\mathbb{E}}

\newcommand{\sgn}{\text{sgn}}

\newcommand{\RR}{\mathbb{R}}
\newcommand{\PP}{\mathbb{P}}
\newcommand{\EE}{\mathbb{E}}

\newcommand{\cX}{\mathcal{X}}

\usepackage{amsmath,amsthm,amssymb}
\DeclareMathOperator*{\argmax}{arg\,max}

\newcommand{\simiid}{\overset{\textrm{i.i.d.}}{\sim}}
\newcommand{\simind}{\overset{\textrm{ind.}}{\sim}}
\usepackage{todonotes}
\usepackage{enumerate}
\usepackage{caption}
\usepackage{bbm}
\usepackage{subcaption}
\usepackage{algorithm}
\usepackage[noend]{algpseudocode}
\newtheorem{theorem}{Theorem}[section]
\newtheorem{lemma}[theorem]{Lemma}

\newcommand\numberthis{\addtocounter{equation}{1}\tag{\theequation}}

\usepackage{pdflscape}
\bibliographystyle{unsrtnat}
\title{AdaPT-GMM: Powerful and robust covariate-assisted multiple testing}

\author{
  Patrick Chao\\
  \texttt{pchao@wharton.upenn.edu} \\
 William Fithian \\
  \texttt{wfithian@berkeley.edu} \\
}

\begin{document}
\maketitle

\begin{abstract}
We propose a new empirical Bayes method for covariate-assisted multiple testing with false discovery rate (FDR) control, where we model the local false discovery rate for each hypothesis as a function of both its covariates and $p$-value. Our method refines the adaptive $p$-value thresholding (AdaPT) procedure by generalizing its masking scheme to reduce the bias and variance of its false discovery proportion estimator, improving the power when the rejection set is small or some null $p$-values concentrate near 1. We also introduce a Gaussian mixture model for the conditional distribution of the test statistics given covariates, modeling the mixing proportions with a generic user-specified classifier, which we implement using a two-layer neural network.  Like AdaPT, our method provably controls the FDR in finite samples even if the classifier or the Gaussian mixture model is misspecified. We show in extensive simulations and real data examples that our new method, which we call \gmm, consistently delivers high power relative to competing state-of-the-art methods. In particular, it performs well in scenarios where AdaPT is underpowered, and is especially well-suited for testing composite null hypothesis, such as whether the effect size exceeds a practical significance threshold.

\end{abstract}
\section{Introduction}\label{sec: Introduction}

\subsection{Multiple Testing with Covariates}\label{subsec: Multiple Testing}

In most high-throughput multiple testing applications, the hypotheses are not {\em a priori} exchangeable, but rather have known histories and meaningful relationships to one another. For example, when screening many genetic point mutations for marginal association with a given phenotype, we typically have a great deal of prior side information about each mutation including estimated associations with other, related phenotypes; information from gene ontologies about what pathways its gene is involved in; the minor allele frequency, and more. 

We consider the problem of {\em covariate-assisted multiple testing} where for each null hypothesis $H_i,$ $i=1,\ldots,n$, we observe not only a $p$-value $p_i$ but also a {\em covariate} or {\em predictor} $x_i$ in some generic predictor space $\cX$, typically $\RR^d$. We assume throughout that the covariates are fixed, or equivalently that the distributional assumptions on the $p$-values hold after conditioning on $x_1,\ldots,x_n$. Our aim is to test the hypotheses while controlling at some prespecified significance level $\alpha$ the {\em false discovery rate} (FDR), defined as $\EE[V / \max\{R,1\}]$, where $R$ is the number of total rejections and $V$ is the number of rejected true null hypotheses (or ``false discoveries'') \citep{BH}. The random variable $V/\max\{R,1\}$ is called the {\em false discovery proportion} (FDP).

We are especially interested in the common setting where the $p$-values are derived from $z$-values $z_i \sim \cN(\theta_i, \sigma_i^2)$, with known standard error $\sigma_i > 0$. Although the most common null hypothesis to test in this setting is the {\em point null} $H_i:\; \theta_i = 0$, it can be more scientifically interesting to test the {\em one-sided null} $H_i:\; \theta_i \leq 0$ (or $H_i:\;\theta_i \geq 0$) or the {\em interval null} $H_i:\; |\theta_i| \leq \delta$ for some minimum effect size of interest $\delta > 0$. 

In Bayesian terms, our goal is to learn what the covariate $x_i$ tells us about the prior likelihood that $H_i$ is true and the distribution of $p_i$ under the null and alternative, which together determine the {\em local false discovery rate} (lfdr), the posterior likelihood that $H_i$ is true after observing $p_i$. For example, by analyzing the data jointly we might learn that $p_i=0.02$ indicates a promising lead when $x_i$ falls in a particular signal-rich region of the predictor space, but probably reflects noise when $x_i$ falls in a region with very few non-null hypotheses. If so, we can favor hypotheses in the first region by using a covariate-dependent {\em rejection surface} $s:\;\cX\to [0,1]$ that is more liberal in signal-rich regions and more stringent elsewhere, rejecting $H_i$ when $p_i \leq s(x_i)$. When the covariates are informative, covariate-assisted methods can be far more powerful than methods like the Benjamini--Hochberg (BH) procedure \citep{BH} that use a common rejection threshold for all $p_i$.

A key challenge in realizing these power gains is to avoid inflating the type I error rate despite using the data twice: first to find the signal-rich regions of $\cX$ and second to test the hypotheses. Early approaches to FDR control with informative covariates avoid FDR inflation by using fixed weights that proportionally relax the rejection threshold for {\em a priori} promising hypotheses while tightening it for others \citep{benjamini1997multiple,genovese2006false,dobriban2015optimal}, by grouping similar null hypotheses and estimating the true null proportion within each group \citep[e.g.][]{hu2010false,LFDR,liu2016new}, by ordering the hypotheses from most to least promising \citep{g2016sequential,li2017accumulation,lei2016power,cao2021optimal}, or by enforcing structural constraints on the rejection set \citep{yekutieli2008hierarchical,lynch2016procedures,STAR}. More recent approaches have sought to adaptively estimate a powerful rejection surface using all of the data, including generic covariate information as well as the $p$- or $z$-values themselves, which usually carry the best direct evidence about where the signals are \citep[e.g.][]{ferkingstad2008unsupervised,FDRreg,IHWignatiadis,BL,SABHA,bbfdr,AdaFDR}. We review and compare these methods in detail in Section~\ref{sec: Methods under comparison}.

The adaptive $p$-value thresholding (AdaPT) method of \citet{AdaPT} is a flexible and robust framework for covariate-assisted multiple testing. The user iteratively proposes a series of increasingly stringent rejection surfaces $s_1(x) \geq s_2(x) \geq \cdots$, halting the first time an FDP estimate falls below the target level $\alpha$. AdaPT offers users near-complete freedom in using the data to shape the threshold sequence, preventing FDR inflation by only allowing the analyst to observe partially masked $p$-values. Standard implementations use estimated level surfaces of the {\em local false discovery rate} (lfdr), the posterior probability that $H_i$ is true given $x_i$ and $p_i$, under an empirical Bayes two-group working model \citep{efron2008microarrays}. Because AdaPT guarantees finite-sample FDR control even when the working model is misspecified or overfit, users are free to estimate highly complex models with many predictor variables; for example, \citet{Yurko} implement AdaPT with gradient boosted trees. Indeed, AdaPT is an {\em interactive} procedure in the sense that the user may rethink their entire modeling approach based on exploratory analysis of the (masked) data midway through the procedure, without threatening the FDR guarantee.

Despite these advantages, the original AdaPT method performs poorly in two contexts: First, the finite-sample FDR control guarantee is bought at the price of a finite-sample correction that limits the method's power and stability when the rejection set is small. And second, the FDP estimator is biased upward when some of the null $p$-values concentrate near 1, as can happen when we test composite null hypotheses and some parameters lie in the interior of the null. \cite{Korthauer} remark on these shortcomings while empirically comparing the performance of various procedures including AdaPT, as do \citet{IHWignatiadis}. We discuss these issues in detail in Section~\ref{subsec: AdaPT Shortcomings}. In Section~\ref{sec: AdaPT_g} we introduce a refinement and generalization of the AdaPT procedure that we call \adaptg, which corrects AdaPT's shortcomings by modifying its masking function. 

When $z$-values are available, it is in general more efficient to use them directly rather than operating on $p$-values, especially in two-sided testing where the $p$-value transform destroys directional information \citep{sun2007oracle}. We propose a new conditional Gaussian mixture model (GMM) for the distribution of $\theta_i$ given the covariates, taking the location and scale of each component to be fixed but letting the mixing proportions vary with $x_i$. Using an expectation-maximization (EM) computation framework described in Section~\ref{sec: Implementation}, we can model this dependence on $x_i$ using any off-the-shelf classification algorithm. By modeling the test statistics directly, we can take advantage of the sign of the test statistics when the prior distribution is asymmetric, account for varying standard errors across tests, and allow for greater flexibility in testing null hypotheses other than the point null.

To demonstrate the improved reliability of the \adaptg procedure, we reproduce the empirical studies of \citet{Korthauer}, including two new methods: the AdaPT-GLM$_g$ procedure, which uses the same working model and estimation procedure as the AdaPT-GLM procedure of \citet{AdaPT} but with our new masking function, and the \gmm procedure, which replaces the GLM model for the $p$-values with our Gaussian mixture model for the $z$-statistics. Figure~\ref{fig: Case Study experiments} reproduces the main figure of \citet{Korthauer} with these two new methods added, showing substantial power gains over AdaPT in the scenarios where AdaPT fails, as well as consistent power gains over other state-of-the-art methods. We discuss these experiments in greater detail in Section~\ref{sec: Experiments}.

\subsection{Adaptive p-value Thresholding (AdaPT)}\label{subsec: AdaPT}

AdaPT is a flexible iterative framework for covariate-assisted multiple testing where the analyst proposes a series of increasingly stringent candidate rejection thresholds $s_0(x) \geq s_1(x) \geq \cdots \geq s_n(x)$, calculating an estimate of the FDP for each threshold. If the estimate at step $t$ falls below the target level $\alpha$, the method terminates and rejects every $H_i$ with $p_i \leq s_t(x_i)$; otherwise, the analyst is prompted to specify another threshold $s_{t+1}(x) \leq s_t(x)$. Figure~\ref{fig:AdaPT Masking Progression} illustrates a possible sequence of candidate rejection thresholds.

The false discovery proportion in the candidate rejection region, colored red in Figure~\ref{fig:AdaPT Masking Progression}, is estimated by comparing the number of rejections to the number of points in a mirrored region, colored blue:
\begin{equation}\label{eq:adapt-FDPhat}
\hFDP_t = \frac{1+A_t}{R_t}, \quad \text{where } R_t = |\{i:p_i\le s_t(x_i)\}| \quad \text{and } A_t = |\{i:p_i\ge 1-s_t(x_i)\}|
\end{equation}
represent the number of red and blue points, respectively. If the null $p$-values are uniform, then $A_t$ serves as a slightly conservative estimator for the number of false discoveries $V_t =|\{i: H_i \text{ is true and } p_i\le s_t(x_i)\}|$, and the addition of 1 in the numerator represents a technical finite-sample correction. If $\hFDP_t$ never falls below $\alpha$ while $R_t > 0$, we make no rejections.

\begin{figure}[t]
\centering
  \includegraphics[width=\columnwidth]{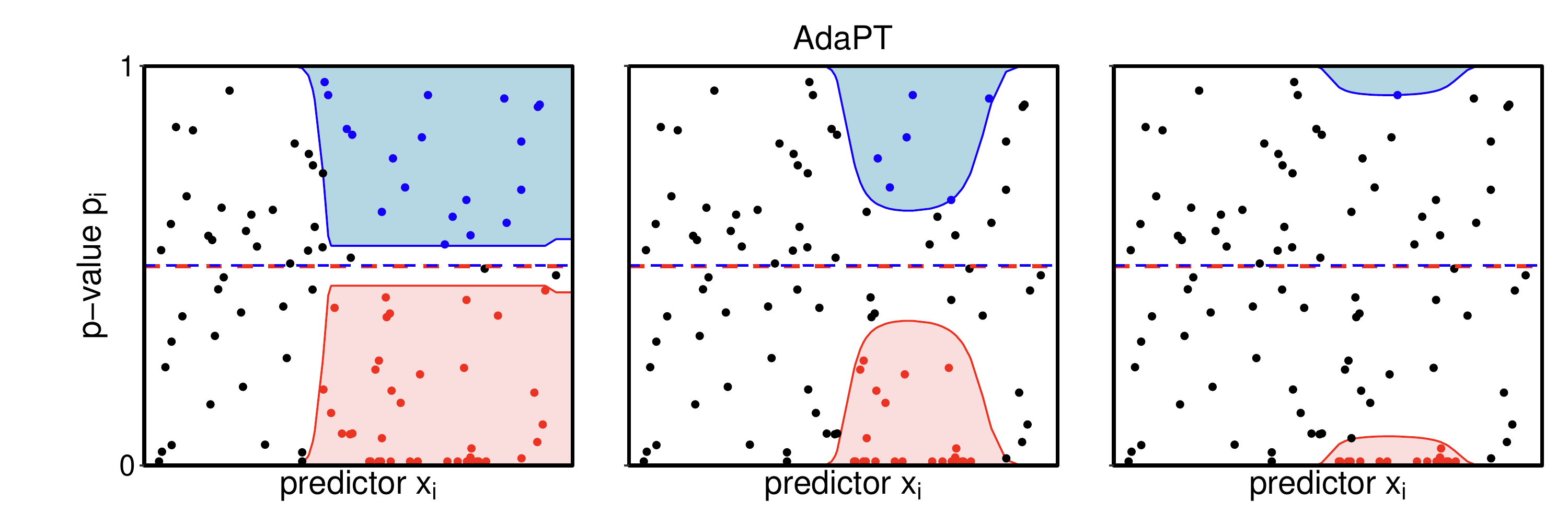}
\caption{Example progression of the AdaPT candidate rejection threshold $s_t(x)$ (red curve) for three values of $t$. $R_t$, the number of points in the red region, is the number of candidate rejections. $A_t$, the number of points in the blue mirrored region with $p_i \geq 1- s_t(x_i)$, serves to estimate $V_t$, the number of false discoveries in the red region.}
\label{fig:AdaPT Masking Progression}
\end{figure}

At step $t$ the analyst may use any data-adaptive method to choose the next threshold $s_{t+1}(x)$. In most implementations of AdaPT the thresholds are chosen to be level surfaces of the lfdr:
\begin{align}
  \text{lfdr}(p \mid x) =\PP\left(H_i\text{ is null}\mid p_i=p,x_i=x\right) \label{eq: local fdr},
\end{align}
where the conditional probabilities are calculated with respect to an empirical Bayes working model for the $p$-values called the {\em conditional two-groups model}; see \citet{AdaPT} for more details.

Because AdaPT uses the same data twice, first to select the threshold sequence and again to make rejections, it must protect against the risk of FDR inflation, which it does by strategically controlling what the analyst is allowed to observe at each step. Specifically, the analyst is initially only allowed to observe a {\em masked} version $m_i = \min\{p_i, 1-p_i\}$ of each $p$-value $p_i$. For example, if $m_i = 0.01$, then the analyst knows only that $p_i \in \{0.01, 0.99\}$. As the procedure unfolds, a $p$-value is ``unmasked'' ($p_i$ is observed) once $s_t(x_i) < m_i$ (i.e., once $p_i$ no longer contributes to $A_t$ or $R_t$). At step $t$ the analyst is allowed to observe $A_t$ and $R_t$ (to calculate $\hFDP_t$), all of the covariates $x_i$ and masked $p$-values $m_i$, and the unmasked $p$-values $p_i$ for indices in $\cM_t^\setcomp$, where $\cM_t = \{i:\; m_i \leq s_t(x_i)\}$; we call $\cM_t$ the {\em masking set}. Because $s_t$ decreases at every step, more $p$-values are unmasked as the procedure unfolds. By the end of the procedure only $A_t + R_t < (1+\alpha)R_t$ $p$-values remain masked, so that later lfdr estimates are typically calculated using almost all of the data.  We require without loss of generality that $\cM_t$ shrinks by at least one index in each step, so the procedure terminates after no more than $n$ steps.

This masking scheme is enough to prove a robust FDR control guarantee that holds regardless of how the analyst chooses to select the next threshold $s_{t+1}(x_i)$ at each step. \citet{AdaPT} show that AdaPT controls FDR at the target level $\alpha$ in finite samples, under two assumptions: First, the null $p$-values $(p_i)_{\cH_0}$ must be mutually independent of each other and of the non-null $p$-values $(p_i)_{\cH_0^\setcomp}$. $p$-values are rarely independent in practice, but there has been recent progress toward relaxing it in models where the dependence is known; see \citet{DependenceFDRFithian}. Second, for each $i\in \cH_0=\{i: H_i \text{ is true}\}$, the {\em mirror-conservative} condition must hold:
\begin{align}\label{eq: mirror conservative}
     \PP(p_i\in[a,b])\;\le\; \PP(p_i\in[1-b,1-a])\quad \text{ for all }0\le a\le b \le 0.5.
\end{align}
A sufficient condition for \eqref{eq: mirror conservative} is that $p_i$ has a non-decreasing density under $H_i$, as is the case for $p$-values from $z$- or $t$-tests, or any other continuous $p$-values from monotone likelihood ratio families \citep{STAR}.

Crucially, if the analyst uses an empirical Bayes working model to estimate an optimal threshold sequence, it is not required that the model is correctly specified or that the lfdr is estimated accurately. As a result, the analyst is liberated to use any combination of intuition, Bayesian priors, statistical estimation, or black-box machine learning to select the thresholds.

\subsection{Shortcomings of AdaPT}\label{subsec: AdaPT Shortcomings}
The AdaPT procedure is underpowered in two main situations:
\begin{enumerate}[(i)]
  \item Due to the functional form of the estimator $\widehat {\text{FDP}}_t = (1+A_t)/R_t \geq 1/R_t$, it is impossible for AdaPT to ever make fewer than $1/\alpha$ rejections (unless it makes no rejections). As a result, if only a few hypotheses are discernibly non-null, AdaPT may not be able to reject them, even if their $p$-values are exactly 0. Consequently, AdaPT is underpowered when $n$ is small, or when there are very few non-null hypotheses to find. \label{shortcomings: min hypo}
  \item If some null $p$-values concentrate at $1$, they will tend to inflate both $A_t$ and $\hFDP_t$, resulting in lower power. This may occur when we test composite null hypotheses, such as one-sided or interval null hypotheses, if some of the null parameters are located in the interior of the null parameter space. \label{shortcomings: p-values near 1}
\end{enumerate}

The first shortcoming was indirectly observed in \citet{Korthauer}, who remarked on AdaPT's low power in their simulation settings where other procedures made a small number of rejections, while \citet{IHWignatiadis} discusses both shortcomings. Observing many null $p$-values close to 1 can often confound multiple testing methods that are designed envisioning uniform null $p$-values, but properly designed methods can often improve the power relative to scenarios where the null $p$-values are uniform; see e.g. \citet{RomanoWolf, ZhaoSmallSu, ellis2020gaining,tian2019addis}.

There is an additional philosophical or practical objection one can make to the AdaPT procedure, if we are concerned about giving the researcher too many degrees of freedom:

\begin{enumerate}[(i)]
\setcounter{enumi}{2} 
  \item AdaPT can reject $p$-values greater than the nominal FDR level. In particular, it is possible for a motivated investigator to reject their favorite hypothesis $H_{i^*}$ with probability approaching 50\%, if they force $s_t(x_{i^*})$ to remain at 0.5 throughout the procedure; in that case, $H_{i^*}$ will be rejected whenever $p_{i^*} < 0.5$ and the rejection set is non-empty.  \label{shortcomings: reject > nominal}
\end{enumerate}

In some cases where $x_i$ is a highly informative predictor, standard implementations of AdaPT may estimate a very low lfdr even for $p$-values on the order of $0.15$ or $0.2$ and reject them. Whether we regard this as a feature to preserve or a bug to eliminate will depend on scientific considerations including our credence in the empirical Bayes working model.
 
Finally, we may wish to model the test statistics instead of $p$-values: 
 \begin{enumerate}[(i)]
\setcounter{enumi}{3}
 \item Whereas earlier implementations of AdaPT model the $p$-values, typically using a gamma generalized linear model (GLM) for $-\log p_i$, we may prefer modeling test statistics instead of $p$-values, especially in two-sided problems where the alternative distribution may be asymmetric. By shifting the focus of modeling to the test statistics themselves, we can also take direct account of standard errors or sample sizes that vary across hypotheses. \label{shortcomings: model p-values}
\end{enumerate}

We address the first three points in Section~\ref{subsec: Masking Function Families} by generalizing the masking function, and in Section~\ref{subsec: conditional gaussian mixture model} we introduce a new working model for directly modeling test statistics.

\section{The \adaptg Procedure}\label{sec: AdaPT_g}

\subsection{Generalizing the Masking Function}\label{subsec: Masking Function Families}
\begin{figure}[t]
\centering
\begin{subfigure}{.49\textwidth}
  \centering
  \includegraphics[width=0.9\columnwidth]{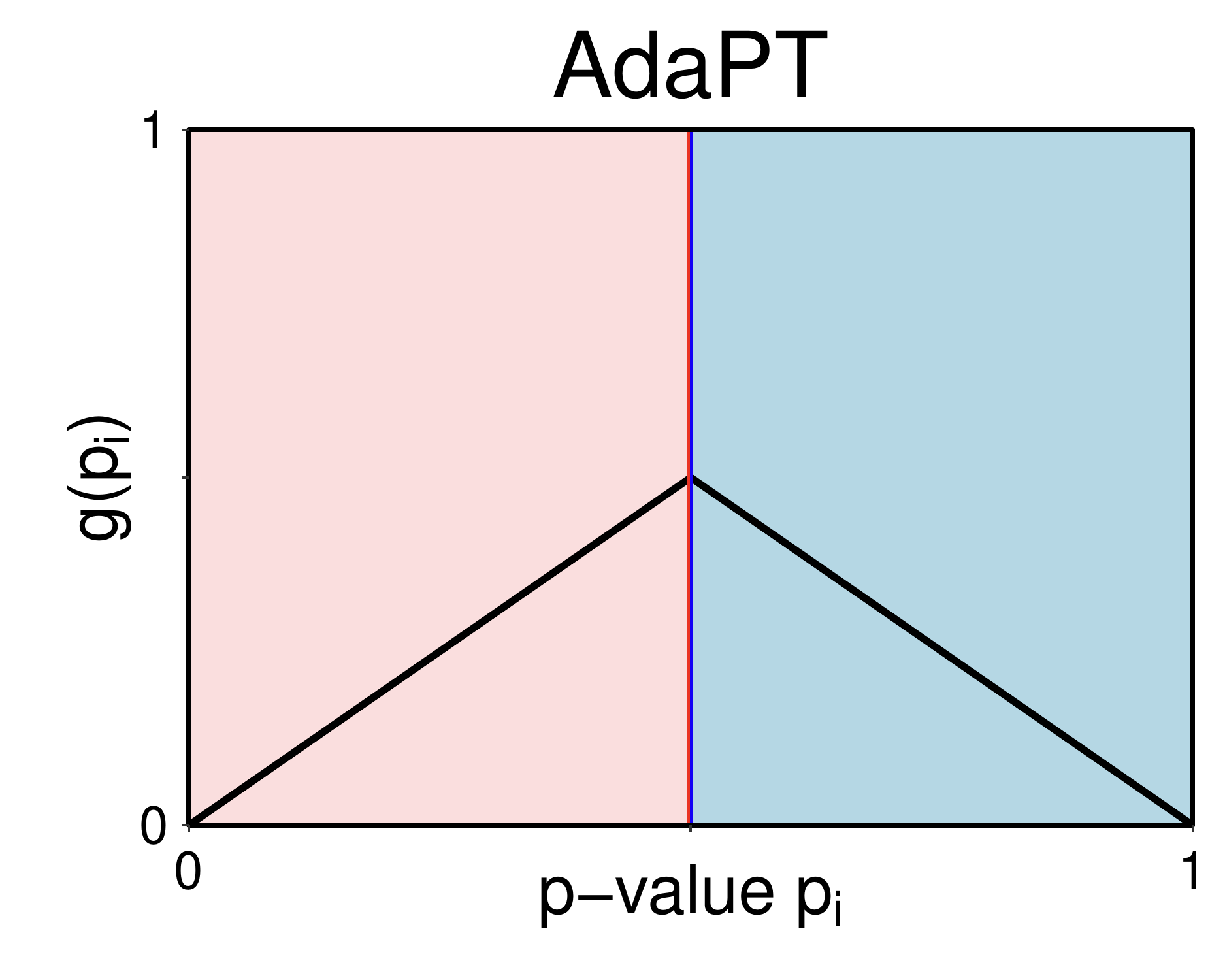}
\end{subfigure}\hfill
\begin{subfigure}{.49\textwidth}
  \centering
  \includegraphics[width=0.9\columnwidth]{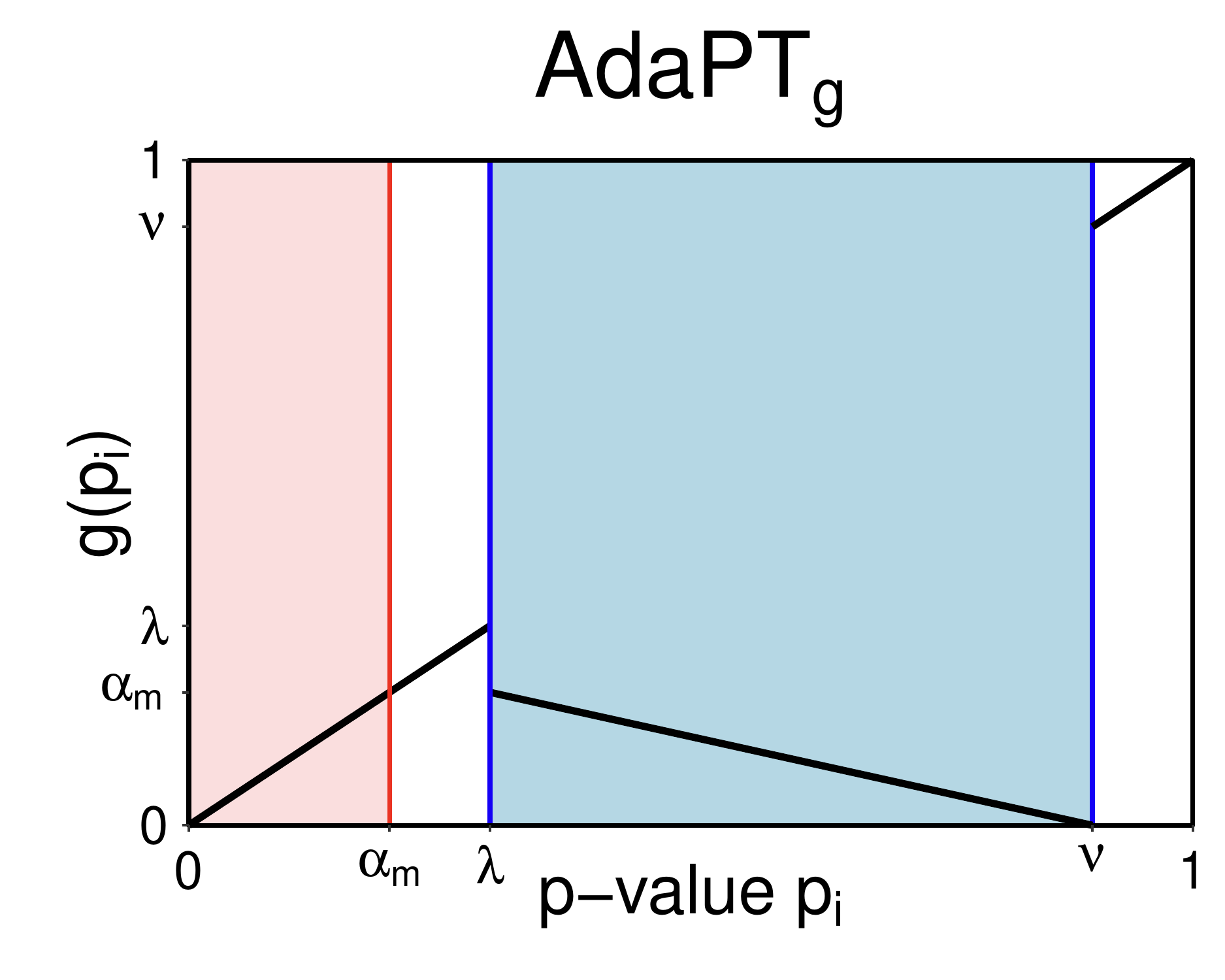}
\end{subfigure}
\caption{Masking function examples. The colored regions correspond to $p$-values that are masked and the y-axis are the masked $p$-values, what the analyst would observe. Left: Symmetric masking function for AdaPT with $\alphamax=\lambdamin = 0.5$, $\lambdamax=1$. Right: Generalized masking function with $\alphamax = 0.2$, $\lambdamin = 0.3$, $\lambdamax = 0.9$, and stretch factor $\zeta = 3$.}
\label{fig:Masking Functions}
\end{figure}

The masked $p$-value in AdaPT is the output of a two-to-one function $g(p_i) = \min\{p_i, 1-p_i\}$ whose form determines various properties of the procedure including the estimator $\hFDP_t$. Subsequent works have considered generalizations of this masking function for interactive procedures with side constraints on the rejection set \citep{STAR} and interactive FWER control procedures \citep{duan2020familywise}. The ``gap'' and ``railway'' shapes considered by \citet{duan2020familywise} are precursors to our masking function, with a similar motivation of avoiding mapping high-density regions to the same value as $g(0)$.

By generalizing the masking function, we can address the first three shortcomings of the AdaPT procedure discussed previously. We propose a new family of masking functions parametrized by three parameters, $0 < \alphamax \leq \lambdamin < \lambdamax \leq 1$, which maps a (``blue'') mirror region $[\lambdamin, \lambdamax]$ onto the initial (``red'') rejection region $[0,\alphamax]$, as pictured in Figure~\ref{fig:Masking Functions}. Defining the {\em stretch parameter} $\zeta=(\lambdamax-\lambdamin)/\alphamax$, the ratio between the widths of the two regions, the masking function is defined as
\begin{equation}\label{eq: General Masking Function g(p) definition}
    g(p) = \begin{cases} 
    (\lambdamax - p)/\zeta & p \in [\lambdamin , \lambdamax]\\
    p & \text{otherwise}
    \end{cases}.
\end{equation}
If we take $\alphamax = \lambdamin = 0.5$ and $\lambdamax = 1$, then $\zeta = 1$ and we recover the initial symmetric AdaPT procedure. 

Equation~\eqref{eq: General Masking Function g(p) definition} defines $g(p)$ so that $g(0)=g(\nu)$, creating the ``tent'' shape pictured in Figure~\ref{fig:Masking Functions}. Alternatively we could replace $(\lambdamax - p)/\zeta$ with $(p - \lambdamin)/\zeta$ in the definition of $g$, resulting in  a ``comb'' shape such that $g(0)=g(\lambda)$ instead. We prefer the tent shape for most applications, but use the comb shape for testing interval nulls; we discuss this choice in more detail in Appendix~\ref{app: masking function shapes}.

Figure~\ref{fig:Masking Functions} illustrates the symmetric AdaPT masking function and our generalized masking function side by side. Note that for $p \in (\alphamax, \lambdamin) \cup (\lambdamax, 1]$, $g(p)$ is one-to-one, so $p$-values in that region are never masked from the analyst. For $p_i\in [0,\alphamax] \cup [\lambdamin, \lambdamax]$, the masking function effectively ``splits'' each $p$-value $p_i$ into the masked value $m_i = g(p_i)$ and the binary variable $b_i =  \mathbbm{1}\{p_i \in [\lambdamin, \lambdamax]\}$, which indicates whether $p_i$ is the larger of the two values that $g$ maps to $m_i$. If $p_i$ is uniformly distributed, then $b_i \mid x_i, m_i \;\sim\; \text{Bern}(\zeta/(1+\zeta))$.

Let $p_{i,b}(m_i)$ denote the implied value of $p_i$ if $b_i = b \in \{0,1\}$. For $m_i \leq \alphamax$, $p_{i,0}=m_i$ and $p_{i,1} = \nu - \zeta m_i$; for example, in the masking function pictured in Figure~\ref{fig:Masking Functions}, if $m_i = 0.01$ then $p_i$ is either $p_{i,0} = 0.01$ or $p_{i,1} = 0.9 - 3 \cdot 0.01 = 0.87$. For $m_i > \alphamax$, $b_i=0$ almost surely and $p_{i,1}$ is undefined. 

In the $z$-test setting with $z_i \sim \cN(\theta_i, \sigma_i^2)$, we will want an analogous definition of $z_{i,b}$. Let $\Phi(z)$ denote the standard Gaussian cumulative density function. To test the one-sided hypothesis $H_i:\;\theta_i \leq 0$, we have $p_i = 1 - \Phi(z_i/\sigma_i)$, a one-to-one transform, so we can straightforwardly define $z_{i,b} = \sigma_i\Phi^{-1}(1-p_{i,b})$. For two-sided testing of the point-null $H_i:\; \theta_i = 0$ or the interval null $H_i:\; |\theta_i| \leq \delta$, however, there are four possible $z$-values mapping to the same $m_i$, since each of $p_{i,0}$ and $p_{i,1}$ could result from either a positive or a negative $z$-value. For the point null, we can resolve this issue by also revealing $s_i = \sgn(z_i)(-1)^{b_i}$ at the very beginning of the procedure, which narrows the four options down to two and restores the one-to-one relationship between $p_i$ and $z_i$. If $\theta_i = 0$ then $s_i$ is conditionally independent of $b_i$ given $m_i$, so revealing this information to the analyst does not affect the FDR control guarantee. For the composite interval null, however, we cannot rely on any such conditional independence property, so we must retain all four options in our model fitting. We discuss these implementation details further in Appendix~\ref{app: interval null testing}.

\subsection{The \adaptg Procedure}\label{subsec: adapt_g implementation}
\begin{figure}[t]
\centering
  \includegraphics[width=\columnwidth]{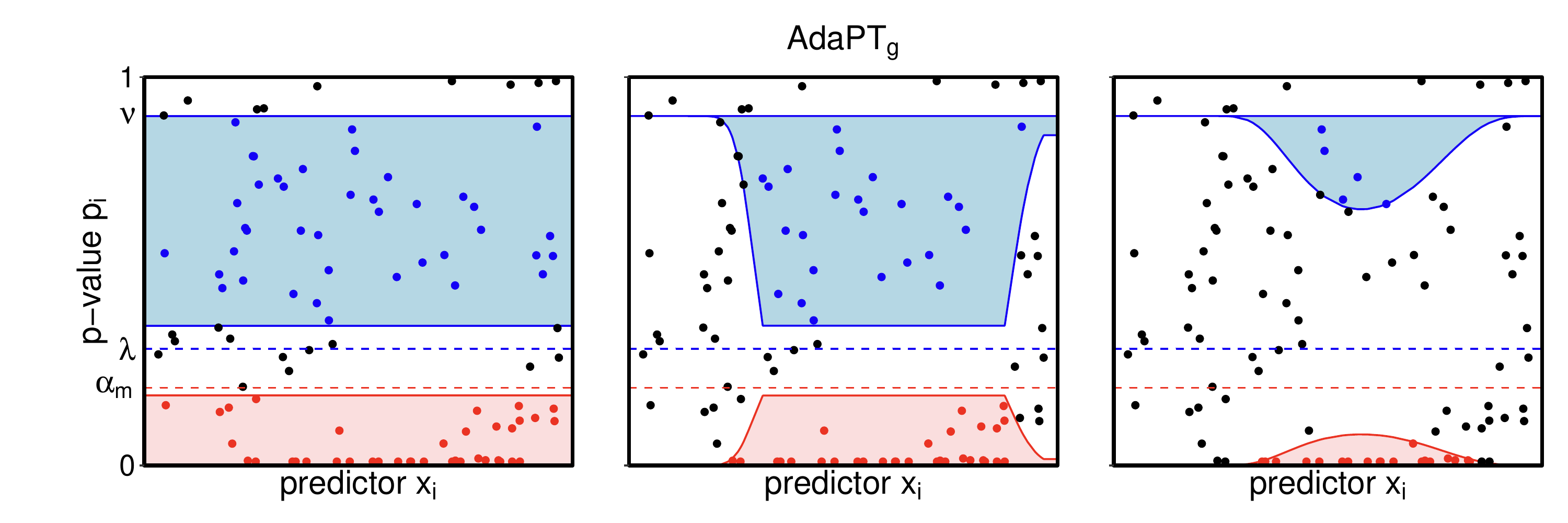}
\caption{Example progression of the \adaptg candidate rejection threshold $s_t(x)$, compared to Figure~\ref{fig:AdaPT Masking Progression}. The blue mirror region is the red region reflected vertically and stretched vertically by a factor $\zeta$. $A_t/\zeta$, the number of blue points divided by the stretch factor, serves to estimate $V_t$, the number of false discoveries among the candidate rejections (red points).}

\label{fig:gmm Masking Progression}
\end{figure}

Like the original AdaPT procedure, the more general \adaptg procedure also estimates FDP for a sequence of increasingly stringent thresholds $s_t(x),$ beginning with the constant threshold $s_0(x) \equiv \alphamax$:
\begin{equation} \label{eq: generalized FDP hat, A_t, and R_t}
\hFDP_t = \frac{1+A_t}{\zeta R_t}, \quad \text{where } R_t = |\{i:\,p_i\le s_t(x_i)\}| \quad \text{and } A_t = |\{i:\, m_i \le s_t(x_i), \,b_i = 1\}|. 
\end{equation}

Figure~\ref{fig:gmm Masking Progression} shows an example progression of the \adaptg procedure analogous to Figure~\ref{fig:AdaPT Masking Progression}, but with the generalized masking function. As before, $R_t$ is the number of ``red'' candidate rejections and $A_t$ is the number of points in the ``blue'' mirror region, now stretched by a factor $\zeta$. More explicitly, $i$ contributes to $A_t$ if $\lambdamax - \zeta s_t(x_i)\le p_i \le \lambdamax$. The factor of $\zeta$ in the denominator reflects the stretching; heuristically, we now have $V_t \approx A_t/\zeta$ since a uniform $p$-value is $\zeta$ times as likely to be ``blue'' than ``red.'' Formally, the estimator is based on the general version of selective SeqStep defined in \citet{barber2015controlling}. The minimum nonzero number of rejections at level $\alpha$ is now $R_{\min} = (\zeta\alpha)^{-1}$, since we can halt when $A_t = 0$ and $R_t \geq R_{\min}$.

As in the original AdaPT procedure, at step $t$ the analyst is only allowed to observe $A_t$ and $R_t$, all $x_i$ and $m_i$ values, and $p_i$ for $i \in \cM_t^\setcomp$, where as before $\cM_t = \{i:\; m_i \leq s_t(x_i)\}$, so that only observations contributing to $A_t$ or $R_t$ are masked. Equivalently, we can say the analyst observes $A_t$ and $R_t$, all $x_i$ and $m_i$ values, and $b_i$ for $i \in \cM_t^\setcomp$. 

To see why the new masking scheme can resolve AdaPT's first three shortcomings, suppose we take $\alphamax = 0.05$, $\lambdamin = 0.4$, and $\lambdamax = 0.9$, giving a stretch factor $\zeta = 10$. Then, if we are controlling FDR at level $\alpha = 0.1$,
\begin{enumerate}[(i)]
\item the small-sample issue is resolved because $R_{\min} = (10 \cdot 0.1)^{-1} = 1$, so we are able to make any number of rejections;
\item null $p$-values in $(0.9, 1]$ do not contribute to $A_t$, so a null $p$-value density spike at 1 does not inflate the FDP estimate; and
\item all rejection thresholds are uniformly no higher than $0.05$, so no individual null $H_i$ can be rejected with probability higher than $0.05$.
\end{enumerate}

In our view, these are suitable default parameter choices for a conservative scientist who wishes to insist on strong individual evidence against each rejected hypothesis. A user interested primarily in maximizing the power may prefer to increase $\alphamax$, in which case the tradeoffs between the parameters must be weighed more carefully. 

On one hand, choosing a large $\zeta$ improves the FDP estimation in addition to reducing $R_{\min}$. To see why, suppose that $x_i \simiid P_x$, the conditional null proportion is $\pi_0(x) = \PP(H_i \text{ is true } \mid x_i=x)$, and the null $p$-values are i.i.d. uniform, and consider running \adaptg with a fixed (non-adaptive) threshold sequence. Then the expected number of false discoveries at step $t$ is 

\[
\lambda_t \;=\; \EE V_t \;=\; n \int_\cX \pi_0(x) s_t(x)\,dP_x(x),
\]
and $V_t \sim \text{Binom}(n, \lambda_t/n) \approx \text{Pois}(\lambda_t)$ if $\lambda_t \ll n$, as we typically expect. If null $p$-values predominate in the blue region, we likewise have $A_t \approx \text{Pois}(\zeta \lambda_t)$, so that 
\[
\EE\left[\frac{A_t+1}{\zeta} \right] \;\approx\; \lambda_t + \zeta^{-1}, \quad \text{ and } \; \text{Var}\left(\frac{A_t+1}{\zeta} \right) \;\approx\; \lambda_t/\zeta.
\]

Thus, increasing $\zeta$ tends to reduce both the bias and the variance of the FDP estimator.
On the other hand, larger values of $\lambdamax$ risk inflating $A_t$ when some $p$-values are super-uniform, and smaller values of $\lambdamin$ both limit how large we can take $\alphamax$, and risk including more alternative hypotheses in the blue region. If we expect many rejections and want to hold open the possibility of rejecting even relatively large $p$-values, we would prioritize increasing $\alphamax$ at the price of reducing $\zeta$, whereas if we expect to reject only a few hypotheses with very small $p$-values, we will tend to prioritize increasing $\zeta$. We provide more aggressive default recommendations in Section~\ref{subsec: Adaptive Masking Functions} for scientists who wish to gain power by increasing $\alphamax$.

As a naming convention in this paper, we denote generalized versions of the AdaPT procedure with this new masking scheme using a $g$ subscript; for example, we call AdaPT-GLM with the new masking function $\text{AdaPT-GLM}_g$ to distinguish it from the previous version that used $g(p)=\min\{p,1-p\}$. As we will see, the new masking consistently improves AdaPT's reliability and power, so asymmetric masking functions will henceforth be the default for versions 2.0 and higher of the \texttt{adaptMT} package.

\subsection{\adaptg without Thresholds}

As observed by \citet{AdaPT} in their discussion of ``AdaPT without thresholds,'' the only role that the threshold sequence plays in the method is in determining which $p$-values are masked at each step, and therefore which $p$-values contribute to $A_t$ and $R_t$. Instead of prompting the analyst for a threshold sequence, we could simply let the analyst choose adaptively at step $t$ which $p$-values to unmask for step $t+1$, giving
\[
[n]\supseteq \cM_0\supsetneq \cM_1\supsetneq \cdots\supsetneq  \cM_n = \emptyset.
\]

More generally, we can implement \adaptg without thresholds for any sequence of masking sets if we augment $x_i$ with the index $i$ and use the threshold sequence $s_t(x,i) = \alphamax \cdot 1\{i \in \cM_t\}$. As long as $\cM_{t+1} \subsetneq \cM_t$ is selected using the available information at step $t$, any such sequence is a valid implementation of \adaptg. Conversely any threshold sequence results in a nested sequence of masking sets, so the two formulations are equivalent. Algorithm~\ref{algo: AdaPT no thresh} formally defines the method, generalizing Algorithm 3 in \citet{AdaPT}.

\begin{algorithm}
  \caption{AdaPT$_g$ without Thresholds}
  \label{algo: AdaPT no thresh}
  \textbf{Input:} predictors and $p$-values $(x_i,p_i)_{i\in[n]}$, masking parameters $\alphamax,\lambdamin,\lambdamax$, target FDR level $\alpha$.\\
  Initialize: $\mathcal{M}_0=\{i\in[n]:p_i\in[0,\alphamax] \text{ or } p_i\in[\lambdamin,\lambdamax]\}$ and $\{m_i\}_{i \in [n]}= \{g(p_i)\}_{i \in [n]}$
  \begin{algorithmic}[1]
    \For {$t=0,1,\ldots$}
      \State{$\widehat{\text{FDP}}_t\leftarrow \frac{1+A_t}{\zeta R_t}$}
      \If{$\widehat{\text{FDP}}_t\le \alpha$}
      \State{Reject $R_t$}
      \EndIf
      \State{$\mathcal{M}_{t+1}\leftarrow {\scriptstyle\mathrm{UPDATE}}((x_i,m_i)_{i\in[n]},\mathcal{M}_{t},\{p_i: i\in \cM_t^\setcomp\})$}

      \State{$R_{t+1}\leftarrow | \{H_i:i \in \mathcal{M}_{t+1} \text{ and } 0 \le p_i\le \alphamax \}| $}
      \State{$A_{t+1}\leftarrow| \{H_i:i \in \mathcal{M}_{t+1} \text{ and } \lambdamin \le p_i\le \lambdamax \}| $}
    \EndFor
  \end{algorithmic} 
\end{algorithm}

\subsection{FDR Control and Power for \adaptg} \label{subsec: FDR Control and revealing hypotheses}

As we see next, \adaptg has a similarly robust FDR guarantee as the original AdaPT procedure.

\begin{theorem}\label{theorem: fdr control}
Assume that the null $p$-values are mutually independent and independent of the non-null $p$-values, and assume that the null $p$-values have non-decreasing density. Then the \adaptg procedure controls the FDR at the target level $\alpha$. 
\end{theorem}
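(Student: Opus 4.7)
The plan is a two-step reduction: first bound FDR by the expectation of a ratio involving only null counts, then apply a supermartingale argument tailored to the new masking scheme. At the stopping step $T$, the procedure ensures $(1+A_T)/(\zeta R_T)\le \alpha$, so $R_T\ge (1+A_T)/(\zeta\alpha)$, giving
\[
\frac{V_T}{\max(R_T,1)}\;\le\; \frac{\zeta\alpha\,V_T}{1+A_T}\;\le\; \frac{\zeta\alpha\,V_T^0}{1+A_T^0},
\]
where $V_T^0 = V_T$ and $A_T^0 \le A_T$ restrict $A_T$ to true-null indices (when $R_T=0$, $V_T=0$ and the FDP is $0$ by convention). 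It therefore suffices to show $\EE[\zeta V_T^0/(1+A_T^0)]\le 1$.

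Next I would condition on all $x_i$, all non-null $p$-values, and all masked values $m_i$; the only remaining randomness lies in the $b_i$'s for nulls $i$ with $m_i\in[0,\alphamax]$. A short Bayes calculation that accounts for the Jacobian $\zeta$ of the blue branch of $g$ yields
\[
\PP(b_i=1\mid m_i)\;=\;\frac{\zeta f_0(\lambdamax-\zeta m_i)}{f_0(m_i)+\zeta f_0(\lambdamax-\zeta m_i)}\;\ge\; \frac{\zeta}{1+\zeta},
\]
where the inequality uses the non-decreasing density assumption together with $\lambdamax-\zeta m_i\ge \lambdamin\ge\alphamax\ge m_i$. Mutual independence of the null $p$-values then makes these $b_i$'s conditionally independent Bernoullis, each stochastically dominating $\mathrm{Bern}(\zeta/(1+\zeta))$.

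With this structure in hand, the proof would close by invoking the general selective SeqStep inequality of \citet{barber2015controlling}. The algorithm's choice of $\cM_{t+1}$ depends only on $(x_i,m_i)_{i\in[n]}$, on the already-unmasked $p$-values, and on the counts $(A_t,R_t)$, so the order in which null $b_i$'s become unmasked is predictable and $T$ is a bona fide stopping time in the associated reversed filtration. The Barber--Candes supermartingale lemma applied to the independent null $b_i$'s then gives $\EE[\zeta V_T^0/(1+A_T^0)]\le 1$, which concludes the argument.

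The main obstacle I expect is not the algebra but the bookkeeping for the filtration: one must verify that conditioning on the non-null $p$-values preserves the Markov structure required for the stopping-time argument, and that the interactive flexibility granted to the analyst (arbitrary rules for choosing $\cM_{t+1}$, possibly based on exploratory side analyses) still leaves each null $b_i$ conditionally $\ge \mathrm{Bern}(\zeta/(1+\zeta))$ given the observables just before its unmasking. This is essentially the same hurdle cleared in \citet{AdaPT} for the original procedure, so I would expect it to carry over with only the modifications needed to handle the stretch factor $\zeta$ and the asymmetric tent-shaped $g$.
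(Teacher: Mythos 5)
Your proposal is correct and follows essentially the same route as the paper: bound $\FDP$ at the stopping time by $\alpha\zeta V_T/(1+U_T)$ using the stopping rule together with $U_T\le A_T$, show each null $b_i$ is conditionally Bernoulli with success probability at least $\zeta/(1+\zeta)$ via the non-decreasing-density assumption, and close with the selective-SeqStep/optional-stopping lemma (the paper cites Lemma 2 of \citet{AdaPT}, which is the same Barber--Cand\`es martingale bound you invoke). Your pointwise computation of $\PP(b_i=1\mid m_i)$ via densities and the Jacobian factor $\zeta$ is a slightly more explicit rendering of the paper's Lemma~\ref{lemma: probability of big}, but it yields the identical bound and the argument is otherwise the same.
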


The proof, which generalizes the FDR control proof for the original AdaPT procedure, is in Appendix~\ref{app: FDR control}. Note the condition on the null $p$-value distributions has been strengthened to require a non-decreasing density under the null. This condition is satisfied by continuous one- or two-sided $p$-values in monotone likelihood ratio families, including the Gaussian distribution. 

As with AdaPT, note that the FDR control guarantee for \adaptg holds in finite samples and does not require that we estimate ldfr consistently, or even that we estimate lfdr at all: for any decreasing threshold sequence, or equivalently any nested sequence of masking sets, FDR is controlled.

To maximize power, at each step the analyst should try their best to reveal ``blue'' rather than ``red'' observations. A natural strategy is to employ some empirical Bayes working model to estimate
\begin{align}
  q_i \;=\; \PP[b_i=1\mid x_i,m_i] \;=\; \frac{\zeta f(p_{i,1} \mid x_i)}{f(p_{i,0} \mid x_i) + \zeta f(p_{i,1} \mid x_i)}, \label{eq: probability of big}
\end{align}
where $f(p \mid x)$ is the mixture distribution of the $p$-values given $x$, and then reveal the $p$-value with the largest estimate $\hq_{i,t}$ given the information available at time $t$:
\begin{equation}\label{eq:strategy}
\cM_{t+1}=\cM_{t}\setminus\{\hat{i}_t\}, \quad \text{ where } \; \hat{i}_t= \argmax_{i\in \cM_t} \hq_{i,t}.
\end{equation}
In case of a tie, we can choose $\hat{i}_t$ arbitrarily from the $\argmax$.

Theorem~\ref{theorem: optimal revealing with oracle} shows that, if we could calculate these conditional probabilities without estimation error, this rule gives the most powerful sequence of masking sets among all adaptive strategies.

\begin{theorem} \label{theorem: optimal revealing with oracle}
The oracle version of the strategy in \eqref{eq:strategy}, where the true probabilities $q_i$ replace the estimates $\hq_{i,t}$ and
\[
\cM_{t+1}=\cM_{t}\setminus\{i_t^*\}, \quad \text{ where } \; i_t^*= \argmax_{i\in \cM_t} q_i
\]
gives the most powerful sequence of masking sets. That is, it maximizes $\PP[R \geq r]$ for every $r = 1,\ldots,m$ over all possible adaptive strategies for shrinking the masking set, where $R$ is the number of rejections made by the generalized AdaPT procedure.
\end{theorem}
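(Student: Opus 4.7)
The plan is to view \adaptg as a sequential Bernoulli-revelation problem and establish optimality of the greedy policy by backward induction on $k=|\cM_t|$. After the initial masking, the state at step $t$ is summarized by the counts $(R_t,A_t)$ and the remaining masked set $\cM_t$ with conditional probabilities $(q_i)_{i\in\cM_t}$; revealing $i\in\cM_t$ transitions the state to $(R_t,A_t-1)$ with probability $q_i$ or to $(R_t-1,A_t)$ with probability $1-q_i$. The assumed independence of the null $p$-values from one another and from the alternatives makes the $b_i$ mutually independent Bernoullis given $(x_i,m_i)_{i\in[n]}$. The procedure stops at the first $\tau$ with $\hFDP_\tau\leq\alpha$ and outputs $R=R_\tau$, so $\PP[R\geq r]$ is a standard dynamic-programming value for each $r$.

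I would prove by induction on $k$ that at every state with $|\cM_t|=k$, the Bellman-optimal first reveal is $i^*=\argmax_{i\in\cM_t}q_i$ uniformly in $r$. For the inductive step, fix a competing first reveal $j\neq i^*$ and introduce the auxiliary strategy $\sigma'_j$ that reveals $i^*$, then $j$, then greedy on $\cM_t\setminus\{i^*,j\}$. Let $\sigma_{i^*}$ denote the greedy strategy (reveal $i^*$, then greedy on $\cM_t\setminus\{i^*\}$) and $\sigma_j$ the strategy that reveals $j$ first, then greedy on $\cM_t\setminus\{j\}$; the latter begins its second reveal at $i^*$, since $q_{i^*}$ is maximal over $\cM_t\setminus\{j\}$. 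The inductive hypothesis applied after the first $i^*$-reveal gives $V^{\sigma_{i^*}}\geq V^{\sigma'_j}$, because $\sigma'_j$'s continuation is a non-greedy strategy on $k-1$ observations. It remains to show $V^{\sigma'_j}\geq V^{\sigma_j}$.

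The strategies $\sigma'_j$ and $\sigma_j$ reveal $\{i^*,j\}$ in their first two steps (in opposite orders) and then share the identical greedy continuation on $\cM_t\setminus\{i^*,j\}$ starting from the common post-two-step state $(R-2+b_{i^*}+b_j,\,A-b_{i^*}-b_j)$. Conditioning on $(b_{i^*},b_j)$, the patterns $(1,1)$ and $(0,0)$ produce identical intermediate states and identical outcomes. In case $(1,0)$, $\sigma'_j$ executes a blue-then-red trajectory with intermediate $\hFDP=A/(\zeta R)$, while $\sigma_j$ executes red-then-blue with intermediate $\hFDP=(1+A)/(\zeta(R-1))$; in case $(0,1)$ the roles reverse. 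Because $A/(\zeta R)<(1+A)/(\zeta(R-1))$, a brief case analysis shows the blue-then-red path stops no later than the red-then-blue path, so the associated rejection counts satisfy $R^{\text{br}}\geq R^{\text{rb}}$ pathwise on the shared sample space of $(b_k)_{k\notin\{i^*,j\}}$. Summing the contributions from the two mixed cases and using the imbalance $q_{i^*}(1-q_j)\geq(1-q_{i^*})q_j$ yields
\begin{align*}
\PP[R^{\sigma'_j}\geq r]-\PP[R^{\sigma_j}\geq r]=(q_{i^*}-q_j)\bigl(\PP[R^{\text{br}}\geq r]-\PP[R^{\text{rb}}\geq r]\bigr)\geq 0,
\end{align*}
which closes the induction when combined with $V^{\sigma_{i^*}}\geq V^{\sigma'_j}$.

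The main obstacle is adaptivity: a naive one-step swap would break because $\sigma_j$'s second reveal depends on what it observes first. The two-step swap through the auxiliary $\sigma'_j$ resolves this by forcing both $\sigma'_j$ and $\sigma_j$ to reveal the same pair $\{i^*,j\}$ in their first two steps and to coincide afterward, so the comparison reduces to the pathwise ranking $R^{\text{br}}\geq R^{\text{rb}}$ and the simple probability imbalance above --- the only place where the ordering $q_{i^*}\geq q_j$ is actually used.
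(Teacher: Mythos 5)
Your proof is correct and follows essentially the same route as the paper's: the paper abstracts the problem as a card game and shows by induction that revealing in decreasing order of $q_i$ stochastically minimizes the stopping time, with the key step being exactly your two-step swap of $\{i^*,j\}$ (their $f^{(i)}$ versus $\tilde f^{(i)}$) and the likelihood-ratio inequality $q_{i^*}(1-q_j)\ge (1-q_{i^*})q_j$, computed there conditionally on the revealed total $S_0$. Your direct bookkeeping of $\PP[R\ge r]$ via the pathwise comparison $R^{\mathrm{br}}\ge R^{\mathrm{rb}}$ simply replaces the paper's translation of the stopping rule $\hFDP_t\le\alpha$ into a deterministic threshold sequence $s_t$ for the number of masked blue points; the substance is identical.
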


Theorem~\ref{theorem: optimal revealing with oracle} is proved in Appendix~\ref{pf: optimal revealing with oracle}.

\subsection{Conditional Gaussian Mixture Model}\label{subsec: conditional gaussian mixture model}

We now introduce a flexible new class of working models for the common setting where $z_i \sim \cN(\theta_i, \sigma_i^2)$, and $H_i$ concerns $\theta_i$. Instead of modeling the $p$-values, we model the $z$-values directly, which allows the model to naturally incorporate details such as variability in the standard errors and asymmetry in the alternative distribution. In addition, it is naturally adaptable to the goal of testing composite null hypotheses such as one-sided and interval nulls.

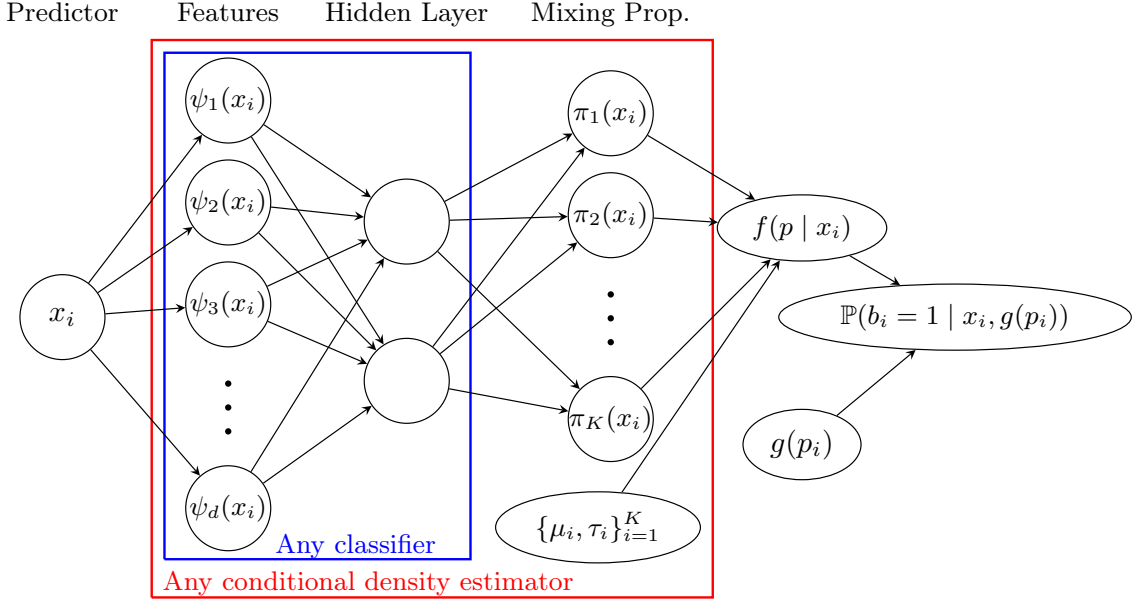
\begin{figure}[h]
    \centering
    \resizebox{\textwidth}{!}{%
    \centering
    \tikzset{%
  every neuron/.style={
    circle,
    draw,
    minimum size=1cm
  },
  every ellipse/.style={
    ellipse,
    draw,
    minimum size=0.7cm
  },
  neuron missing/.style={
    draw=none, 
    scale=2,
    text height=0.333cm,
    execute at begin node=\color{black}$\vdots$
  },
}

\begin{tikzpicture}[x=1.5cm, y=1.5cm, >=stealth]
\node [every neuron](x) at (0,0) {$x_i$};

\foreach \m [count=\y] in {1,2,3,missing,4}
  \node [every neuron/.try, neuron \m/.try] (input-\m) at (1.3,2.5-0.8*\y) {};

  \foreach \m [count=\y] in {1,2,3,d}
  \node[] at(input-\y) {\small ${\psi_\m(x_i)}$};

\foreach \m [count=\y] in {1,2}
  \node [every neuron/.try, neuron \m/.try ] (hidden-\m) at (2.7,2-\y*1.25) {};

\foreach \m [count=\y] in {1,2,missing,3}
  \node [every neuron/.try, neuron \m/.try] (mixing-\m) at (4.3,2.4-0.8*\y) {};

  \foreach \m [count=\y] in {1,2,K}
  \node [] at (mixing-\y) {\small$\pi_\m(x_i)$};


\node [every ellipse](f) at (5.8,0.7) {\small $f(p\mid x_i)$};
\node [every ellipse](mu_tau) at (4.2,-1.65) {\small$\{\mu_i,\tau_i\}_{i=1}^K$};

\node [every ellipse ](prob_blue) at (7,0) {\small$\PP(b_i=1\mid x_i,g(p_i))$};
\node [every ellipse ](g) at (5.8,-1) {$ g(p_i)$};

\draw[blue,thick] (0.8,-1.9) -- (3.2,-1.9) -- (3.2,2.075) -- (0.8,2.075) -- (0.8,-1.9);
\node [blue ] at (2.3,-1.8) {\small Any classifier};
\draw[red,thick] (0.7,-2.2) -- (5.1,-2.2) -- (5.1,2.175) -- (0.7,2.175) -- (0.7,-2.2);
\node [red ] at (2.4,-2.09) {\small Any conditional density estimator};




\foreach \i in {1,...,4}
    \draw [->] (x) -- (input-\i);

\foreach \i in {1,...,4}
  \foreach \j in {1,...,2}
    \draw [->] (input-\i) -- (hidden-\j);

\foreach \i in {1,...,2}
  \foreach \j in {1,...,3}
    \draw [->] (hidden-\i) -- (mixing-\j);

  \foreach \i in {1,...,3}
    \draw [->] (mixing-\i) -- (f);

  \draw [->] (mu_tau) -- (f);
\draw [->] (f) -- (prob_blue);
\draw [->] (g) -- (prob_blue);

\node []  at (0, 2.4) {\small Predictor};
\node []  at (1.3, 2.4) {\small Features};
\node []  at (2.7,2.38) {\small Hidden Layer};
\node []  at (4.3,2.38) {\small Mixing Prop.};

\end{tikzpicture}
    }
    \caption{Graphical representation of the interaction between covariates, classifer, Gaussian mixture model, and computed conditional density of $p$-values.}
    \label{fig: flowchart}
\end{figure}

We model the conditional distribution of $\theta_i$ given predictor $x_i$ as a Gaussian mixture model (GMM) with $K$ classes, where the class probabilities depend on $x_i$:
\[
f(\theta_i \mid x_i) \;\sim\; \sum_{k=1}^K \pi_k(x_i) \phi(\theta_i; \mu_k, \tau_k^2), \quad \text{ where } \;\;\phi(\theta; \mu, \tau^2) \,=\, \frac{1}{\sqrt{2\pi \tau^2}} \,e^{-(\theta-\mu)^2/(2\tau^2)}
\]
is the $\cN(\mu,\tau^2)$ density. Note that the location and scale of the mixture components do not depend on $x_i$, but the overall distribution can shift and stretch as $x_i$ changes by varying the mixing proportions.

We emphasize again that our FDR control guarantee does not rely on this model to be correctly specified or estimated accurately. Because deconvolution is a very difficult statistical problem, and we can expect only a small number hypotheses to be discernibly non-null in any given application, we cannot realistically expect our model for $f(\theta_i \mid x_i)$ to closely mirror the true data-generating distribution, even when the marginalized model for $f(z_i \mid x_i)$ fits fairly well. In particular, we will often estimate that most of the data come from a single component with $\mu_k, \tau_k^2 \approx 0$. In that case, any lfdr estimates for the one-sided null will depend sensitively on whether that $\mu_k$ is just above zero or just below zero, which is nearly impossible to estimate from the data. For this reason, our algorithm relies on our estimate for $f(z_i \mid \theta_i)$ instead of our lfdr estimates.

To facilitate estimation, we introduce the latent categorical variable $\gamma_i\in[K]$ to represent which of the $K$ classes $\theta_i$ is drawn from, leading to the hierarchical model
\begin{align}\label{eq: hierarchical model}
\begin{split}
  \PP(\gamma_i = k \mid x_i) &\;=\; \pi_k(x_i)\\
  \theta_i \mid x_i, \gamma_i = k &\;\sim\; \mathcal{N}(\mu_k, \tau_k^2)\\[1.0ex]
  z_i \mid x_i, \gamma_i, \theta_i &\;\sim\; \mathcal{N}(\theta_i, \sigma_i^2).
  \end{split}
\end{align}

We refer to the implementation of AdaPT with the asymmetric masking function and the K-groups GMM as \gmm. Figure~\ref{fig:CGMM model} shows an example of the fitted distribution for one run of the logistic mixture simulation from Section~\ref{subsec: logistic simulations}.

\begin{figure}
    \centering
    \includegraphics[width=0.45\columnwidth]{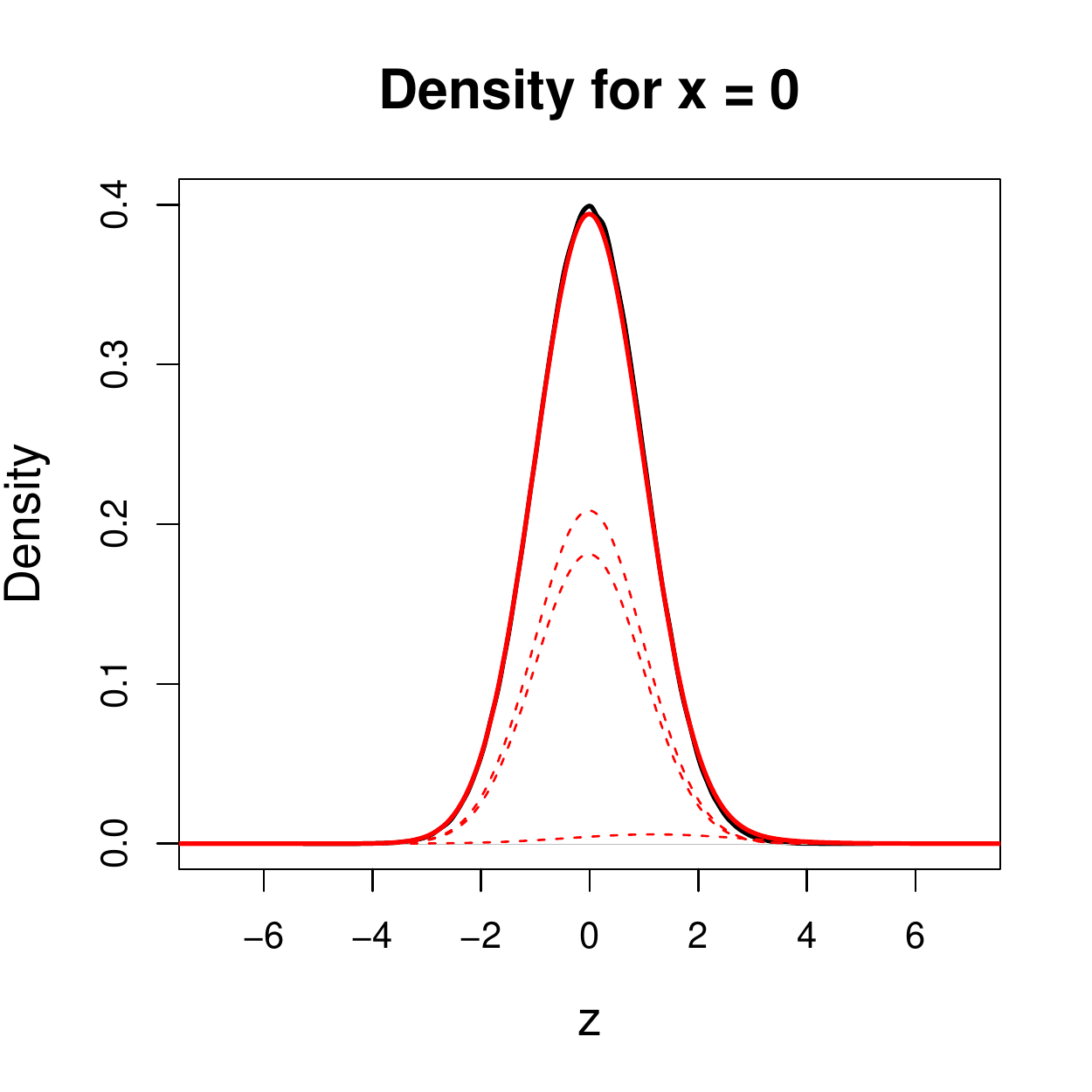}
    \includegraphics[width=0.45\columnwidth]{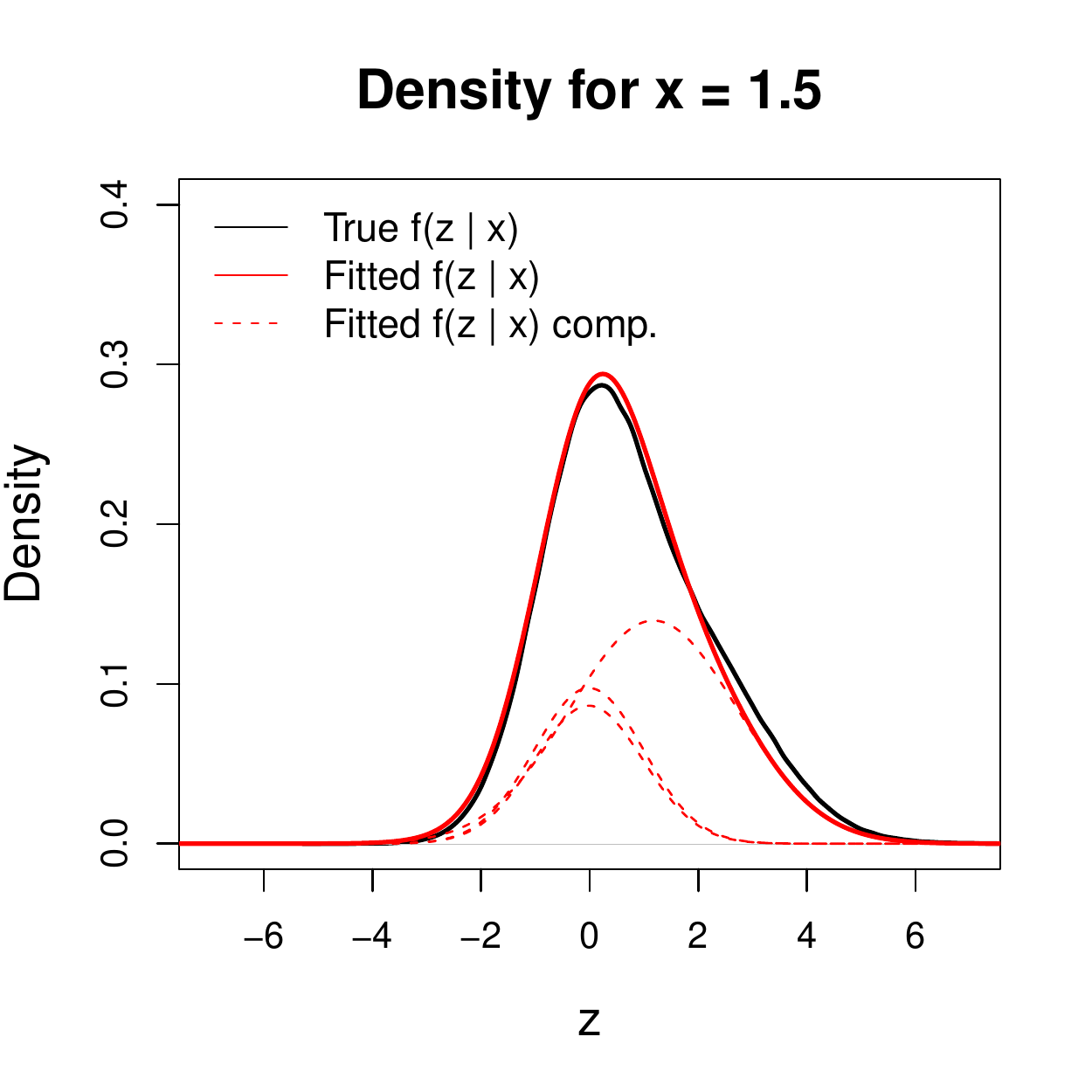}\\
    \caption{Fitted distributions for the density of $z_i$ given $x_i$ for the logistic simulations in Section~\ref{subsec: logistic simulations}. The model estimates three mixture components. Larger values of $x_i$ increases the probability of drawing from the right tail.}
    \label{fig:CGMM model}
\end{figure}

As we will see in Section \ref{sec: Implementation}, we can estimate the functional dependence of $\pi_k$ on $x_i$ by using any off-the-shelf classifier as a module in our flexible EM optimization scheme, provided the classifier accepts weighted observations and outputs class probabilities. In selecting a classifier, we should keep in mind that in multiple testing problems there is typically much less information available to learn complex dependencies on $x_i$ than the nominal ``sample size'' $n$ might suggest, since $\gamma_k$ is only observed indirectly through $\theta_i$ and $z_i$, and the vast majority of $\theta_i$ values are typically indistinguishable from 0. As a result we should usually aim to estimate models with relatively few degrees of freedom.

To this end, we use a neural network model with a single hidden layer as our default modeling option, with a user-specified featurization $\psi(x_i) \in \RR^d$. If the hidden layer has $h$ nodes, then the neural network has $(d + K - 1)h$ parameters to estimate; by contrast, a standard multinomial logit model must estimate $d(K-1)$ parameters. If, say, $K=6, h=2,$ and $d=10$, then the hidden layer does an effective job of economizing on model degrees of freedom.

Depending on the details of the problem, we may wish to further reduce our modeling degrees of freedom by forcing the distribution to be symmetric. If so, we implement the symmetry assumption by replacing each $N(\mu_k, \sigma_k^2)$ mixture component with a mixture of $N(\mu_k, \sigma_k^2)$ and $N(-\mu_k, \sigma_k^2)$, assigning $\pi_k(x_i)/2$ weight to each component. We observe modest performance gains by enforcing symmetry when the data distribution is roughly symmetric.

In some applications, we might also wish to reparameterize the test statistics as standardized $z$-values $z_i' = z_i/\sigma_i$; then the point null and one-sided null can be equivalently stated in terms of the standardized effect size $\theta_i' = \theta_i/\sigma_i$, and $z_i'$ can used as inputs to the method with unit variance. This decision mostly comes down to which of $\theta_i$ or $\theta_i'$ we expect is more likely to follow a predictable distribution given $x_i$; in either case we can use $\sigma_i^2$ as a predictor variable. Finally, if we want to use \gmm in a problem for which only $p$-values are available, we can map $p_i \mapsto z_i = \Phi^{-1}(1-p_i)$ and input the latter as right-tailed $z$-values with unit variance. In all of our empirical studies, we use the symmetric method with standardized $z$-values as inputs when $z_i$ and $\sigma_i^2$ are available, and otherwise we map the $p$-values to right-tailed $z$-values.

\section{Implementation}\label{sec: Implementation}

\subsection{Expectation-Maximization Algorithm}

To implement the conditional GMM of Section~\ref{subsec: conditional gaussian mixture model}, we use an expectation-maximization (EM) algorithm with a generic classifier module in the ``M-step.'' The observed data at step $t$ are $x_i$ and $m_i$ for each $i$ as well as $b_i$ for the currently unmasked $p$-values, $i\in \cM_t^\setcomp$; we treat $b_i$ as missing data when $i \in \cM_t$. In addition, we introduce latent variables $\gamma_i$ for each $i$, representing the mixture component responsible for $z_i$. While $A_t = \sum_{i\in \cM_t} b_i$ and $R_t = |\cM_t|-A_t$ are also observed by the analyst, we ignore them here and estimate the parameters as though $A_t$ and $R_t$ were unknown, since conditioning on the sum of the masked $b_i$ values introduces a substantial complication to the algorithm for a minimal improvement in estimation performance.

Rather than introducing the true parameters $\theta_i$ as additional latent variables, we can simply marginalize over them in \eqref{eq: hierarchical model} to obtain the reduced-form model
\begin{align}\label{eq: reduced model}
\begin{split}
  \PP(\gamma_i = k \mid x_i) &\;=\; \pi_k(x_i)\\
  z_i \mid x_i, \gamma_i = k &\;\sim\; \mathcal{N}(\mu_k, \,\tau_k^2 + \sigma_i^2)\\
  \end{split}.
\end{align}

For simplicity we will restrict our discussion to testing one-sided hypotheses $H_i:\; \theta_i \leq 0$, so that $p_{i,b} = 1 - \Phi(z_{i,b}/\sigma_i)$; we discuss point and interval null hypotheses in Appendix~\ref{app: two sided testing} and~\ref{app: interval null testing}.

Let $\beta$ denote a generic parameter vector for the class probability model, and denote the full set of parameters for the GMM as $\eta = (\beta, \mu_1, \tau_1^2, \ldots, \mu_K, \tau_K^2)$. The complete data log-likelihood, observing all $b_i$ and $\gamma_i$, is  
\begin{align*}
\ell(\eta; \;\gamma, b, m, x) 
    &= \log \left\{\prod_{i=1}^n \pi_{\gamma_i}(x_i; \beta) \phi(z_{i,b_i}; \mu_{\gamma_i}, \tau_{\gamma_i}^2 + \sigma_i^2)\right\}\\[1.0ex]
    &= \sum_{i=1}^n \sum_{k=1}^K \sum_{b=0}^1 
    \mathbbm{1}\{\gamma_i=k,b_i=b\}\left\{\log \pi_k(x_i; \beta) + \log \phi(z_{i,b}; \mu_{k}, \tau_{k}^2 + \sigma_i^2)\right\}.
  \label{eq: complete loglikelihood} \numberthis
\end{align*}
At EM iteration $s$ for step $t$ of AdaPT we choose the next estimate $\heta^{(s+1,t)}$, the parameters for the next iteration, to maximize the conditional expectation of $\ell(\eta)$, under the current parameter estimate $\heta^{(s,t)}$. To that end, in the ``E step'' we will calculate the conditional probabilities
\[
w_{ikb}^{(s,t)} = \mathbb{P}_{\hat\eta^{(s,t)}}\left(\gamma_i = k, b_i = b \mid (x_j, m_j)_{j\in[n]}, (b_j)_{j\in\mathcal{M}_t^\setcomp}  \right).
\]
Note some of these weights are zero, for example if the analyst has observed $b_i=0$ by step $t$ then $w_{ik1}^{(s,t)} =0$. Furthermore let $w_{ik+}^{(s,t)} = \sum_{b=0}^1 w_{ikb}^{(s,t)}$. After taking conditional expectations, the ``M step'' of \eqref{eq: complete loglikelihood} reduces to
\begin{align}\label{eq: EM max problem}
\begin{split}
\hat{\beta}^{(s+1,t)} &\;= \;\arg\max_\beta\;\; \sum_{i=1}^n \sum_{k=1}^K w_{ik+}^{(s,t)}\log \pi_k(x_i; \beta),\\
\hat{\mu}_k^{(s+1,t)}, (\hat{\tau}_k^2)^{(s+1,t)} &\;=\; \arg\max_{\mu_k, \tau_k^2}\;\; \sum_{i=1}^n \sum_{b=0}^1 w_{ikb}^{(s,t)} \log \phi(z_{i,b}; \mu_{k}, \tau_{k}^2 + \sigma_i^2).
    \end{split}
\end{align}

The update for $\beta$ is a standard optimization problem for an off-the-shelf likelihood-maximizing classifier, where we have one weighted ``observation'' for every combination of $x_i \in \cX$ and $\gamma_i \in [K]$, with case weight $w_{ik+}^{(s,t)}$. We provide five default methods in our package: \texttt{nnet::multinom}, \texttt{nnet::nnet}, \texttt{glmnet::glmnet}, \texttt{mgcv::gam}, and \texttt{VGAM::rrvglm}. In general, the analyst may use their favorite method, including more complex models such as deep neural networks, random forests, or gradient boosting. For the studies discussed above with one-dimensional covariates, we use a neural network model with one hidden layer and a natural cubic spline feature basis.

In the special case where all $\sigma_i^2$ are equal, the update for $\mu_k$ and $\tau_k^2$ amounts to calculating the weighted mean and variance of the $2n$ $z_{i,b}$ values for each component. In the general case, the update has no closed-form solution, so we use the \texttt{optim} package in R with the L-BFGS-B algorithm.

To calculate the $w_{ikb}^{(s,t)}$ correctly, we must take careful account of Jacobians in the nonlinear mappings $g(p)$ and $p(z)$; in particular, the slope of $g(p)$ is $\zeta$ times gentler in the ``blue'' region than it is elsewhere. Letting $dp$ represent an infinitesimal increment around some $m \in (0,\alphamax)$, we have
\begin{align*}
\mathbb{P}_\eta \left(b_i = b, m_i \in m \pm dp \mid \gamma_i, x_i\right)
&\;=\;  
\begin{cases} 
   \mathbb{P}_\eta \left(p_i \in m \pm dp \mid \gamma_i, x_i\right) & \text{if } b=0\\[2.0ex]
   \mathbb{P}_\eta \left(p_i \in \lambdamax- \zeta m \pm \zeta dp \mid \gamma_i, x_i\right) & \text{if } b=1
\end{cases}\\[2.0ex]
&\;=\; \frac{\phi(z_{i,b}(m); \mu_{\gamma_i}, \tau_{\gamma_i}^2+\sigma_i^2)}{\phi(z_{i,b}(m); 0, \sigma_i^2)} \cdot \zeta^b \cdot 2\,dp,
\end{align*}
where $z_{i,0}(m)=\sigma_i \Phi^{-1}(1-m)$ and $z_{i,1}(m) = \sigma_i \Phi^{-1}(1 - (\lambdamax -\zeta m))$ are the ``red'' and ``blue'' $z$-values whose $p$-values map to $m$, and the factor $\phi(z_{i,b}(m); 0,\sigma_i^2)$ in the denominator is the derivative of the one-sided $p$-value transform $p_i=1-\Phi(z_i/\sigma_i)$. 

Because $\sum_{k=1}^K \sum_{b=0}^1 w_{ikb}^{(s,t)} = 1$ for each $i$, we have for masked $p$-values:
\[
\mathbb{P}_\eta \left(\gamma_i=k, b_i = b \mid x_i, m_i\right) \;\propto\; \frac{\phi(z_{i,b}; \mu_{k}, \tau_k^2+\sigma_i^2)}{\phi(z_{i,b}; 0, \sigma_i^2)}  \cdot \zeta^b \pi_k(x_i; \beta).
\]
If we call the expression on the right-hand side $v_{ikb}$, then $w_{ikb} = v_{ikb}/\sum_{k',b'} v_{ik'b'}$ gives the correct probabilities. For already revealed $b_i$ values, $w_{ikb} = v_{ikb}/\sum_{k'} v_{ikb}$ for $b=b_i$ and $w_{ikb}=0$ otherwise.

\subsection{Tuning parameters and initialization}\label{subsec: Adaptive Masking Functions}

The EM method described above has a variety of tuning choices including the number of classes $K$, the feature basis for the classifier, and any tuning parameters for the classifier such as the number of hidden nodes in the neural network. To select values for the tuning parameters, we can fit a model on all of the masked data for every choice of tuning parameters and perform model selection with AIC, AICc, BIC, HIC, or cross-validation, with AIC as the default choice in our package.

The parameters $\mu_k, \tau^2_k$ are initialized using the $K$-means algorithm and $\beta$ is initialized with an intercept-only model. We also include an intercept-only model among the candidate featurizations by default to account for the possibility that the covariates are uninformative.

To operationalize our procedure, we also need to choose the masking function parameters $\alphamax$, $\lambdamin$, and $\lambdamax$. As we discussed in Section~\ref{sec: AdaPT_g}, there are several tradeoffs that the analyst should consider including their expectations about the number of rejections and the distribution of the null and alternative $p$-values.

Absent user input, our package makes default choices $\lambdamax = 0.9$ to exclude a possible $p$-value spike near 1, and chooses a common value $\alphamax = \lambdamin$ using a heuristic that maximizes $\alphamax$ subject to the constraint that the minimum number of rejections $R_{\min} = (\zeta\alpha)^{-1}$ is at least $\max\{1,n/300\}$:
\[
\frac{1}{\zeta\alpha} \geq \max\{1,n/300\} \iff \zeta \leq \left(\alpha \max \{1,n/300\}\right)^{-1}.
\]
This choice reflects a default expectation that the number of rejections we expect will scale roughly with $n$. We also expect heuristically that there are diminishing returns to increasing $\alphamax$ beyond 0.3 even if $n$ is very large. When $\lambdamax=0.9$ and $\alphamax=\lambdamin$, this corresponds to an upper bound $\zeta \geq 2$. Pulling these rules together, we set
\[
\alphamax = \lambdamin = \frac{0.9}{\zeta + 1}, \quad \text{ where } \zeta = \max\left\{2,\;\; \left(\alpha\max\{1,n/300\}\right)^{-1} \right\} = \max\left\{2, \;\; \min\left\{\frac{1}{\alpha}, \frac{300}{n\alpha}\right\}\right\}
\]

Table~\ref{table: default mask} shows the default choices for various settings of $n$, with $\alpha = 0.05$.

\begin{table}[h]
\centering
\caption{Default masking function parameters ($\alpha = 0.05$)}
\label{table: default mask}
\begin{tabular}{@{}ccccccc@{}}
\toprule
$n$ &  & $\alphamax$ & $\lambdamin$ & $\lambdamax$ &$\zeta$ & $R_{\min}$ \\ \cmidrule(r){1-1} \cmidrule(l){3-7} 
  $\le 300$ && $0.043$ & $0.043$ & $0.9$& $20$ & $1$\\
  $500$ && $0.069$ & $0.069$ & $0.9$& $12$ & $2$\\
  $1000$ && $0.13$ & $0.13$ & $0.9$&$6$ & $4$\\
  $\ge 3000$ && $0.3$ & $0.3$ & $0.9$&$2$ & $10$\\ \bottomrule
\end{tabular}
\end{table}

\section{Empirical comparison of \gmm with other methods}

\subsection{Review of methods under comparison}\label{sec: Methods under comparison}

In this section we compare \gmm with several other state-of-the-art methods for covariate-assisted multiple testing, which we describe below. To evaluate and compare these methods, we reproduce the empirical analysis of two earlier papers, \citet{Korthauer} and \citet{AdaFDR}, and provide new simulations of our own. Due to the large number of methods under comparison, and because some of the methods are either inapplicable to specific cases, or generically defined with indeterminate implementations for specific cases, we do not include every method in every study. In reproducing the studies from \citet{Korthauer} and \citet{AdaFDR}, we include the methods in each of the original papers as implemented there, along with \gmm. In addition, we include AdaPT-GLM$_g$, the AdaPT-GLM method with our new asymmetric masking function so that, where \gmm outperforms the original AdaPT-GLM, the reader may distinguish how much of the improvement is attributable to the new masking function and how much is attributable to the conditional GMM implementation.

The covariate-assisted methods fall into two groups: those that accept covariates in a generic predictor space, including in particular $\RR^d$, and those that accept a single categorical covariate representing membership in one of $G$ groups. While the group covariate could arise by binning continuous covariates, this approach does not generalize beyond one or two dimensions. 

Our analysis includes two methods that operate on groups: the {\em local false discovery rate} (LFDR) method of \citet{LFDR} and the {\em independent hypothesis weighting} (IHW) method of \citet{IHWignatiadis}. LFDR is an empirical Bayes method that estimates $\text{lfdr}(p_i \mid x_i)$ for every hypothesis in every group, and rejects as many hypotheses with small lfdr as possible, subject to the constraint that
\begin{equation}\label{eq: average FDR}
    \sum_{i: \; H_i \text{ rejected}} \widehat{\text{lfdr}}(p_i \mid x_i) \;\leq\; \alpha.
\end{equation}
The method is optimal when the covariate is categorical and lfdr is given by an oracle. In practice the lfdr is estimated using a Gaussian two-groups model, and the method's asymptotic FDR control guarantee relies on consistent estimation of the lfdr. When the categorical variable is the result of binning a continuous variable, it is challenging to choose the right number of bins; following \citet{Korthauer}, we use IHW's automatic binning method to select the bins.

IHW estimates hypothesis weights as a function of the covariate, by estimating the null proportion 
\[
\pi_0(x) = \PP(H_i \text{ is true } \mid x_i = x)
\]
in an empirical Bayes two-groups model. Then, IHW rejects $p$-values less than a hypothesis weighted threshold. While IHW is formally defined for a single categorical covariate / grouping variable, it can be applied to continuous covariates that have been binned. Like AdaPT, IHW achieves finite-sample FDR control by constructing the weights as functions of censored $p$-values, $p_i\mathbbm{1}\{p_i >\tau\}$ for $\tau \in(0,1]$, typically $\tau = 0.5$. However, this censoring destroys much of the information that is most useful for determining the empirical Bayes prior, since it completely obscures which regions of the predictor space have many very small $p$-values. By contrast, the masking scheme in AdaPT only partially obscures which $p$-values are small, and it is usually possible to impute the very small $p$-values accurately during the empirical Bayes estimation.

An interesting intermediate method is the {\em structure-adaptive Benjamini--Hochberg algorithm} (SABHA) of \citet{SABHA}. Like IHW, SABHA also uses censored p-values $p_i \mathbbm{1}\{p_i> \tau\}$ and estimates a weight $\hat{q}_i$ for each hypothesis, again interpretable as an estimator of the prior odds that $H_i$ is true. As with IHW, this censoring scheme limits the empirical Bayes estimation accuracy.

While \citet{SABHA} motivate their method in terms of structural relationships between the hypotheses that constrain the estimator $\hat{q} = (\hat{q}_1,\ldots,\hat{q}_n)$ such as group or ordinal structure, the structural information could represent locations in a predictor space. SABHA controls FDR in finite samples if the user applies an FDR level correction factor that is based on the Rademacher complexity of $\hat{q}_i$. Because \citet{SABHA} do not suggest implementations for their method that apply to generic predictor spaces, we do not include it in our comparisons.

The Boca--Leek (BL) method of \citet{BL} likewise estimates the null proportion $\pi_0(x)$ using logistic regression on $x$, using $Y_i = \mathbbm{1}\{p_i > \tau\}$ as a binary response variable. They estimate $\pi_0(x)$ for several values of $\tau$ and smooth the results over $\tau$ to obtain the final estimator. The BL method attains asymptotic FDR control provided the estimator for $\pi_0(x)$ is consistent.

{\em FDR regression} (FDRreg) of \citet{FDRreg} is another empirical Bayes method that models $\pi_0(x)$ as a logistic regression, but it is based on $z$-values instead of $p$-values and directly estimates an alternative density in addition to the null proportion. They use an EM scheme to estimate weights based on their method, and allow the user to specify a theoretical or empirical null \citep{efron_empirical}. \citet{FDRreg} do not prove theoretical FDR control guarantees, and \citet{Korthauer} find in their simulations that their method does not reliably control FDR at the advertised level.

The {\em black box FDR} (BB-FDR) method \citep{bbfdr} is yet another empirical Bayes method for FDR control, using a black box model (a deep neural network in their implementation) to fit a prior for the group probabilities in the two groups model. We find that the method can suffer major violations of FDR control under model misspecification, so we do not include BB-FDR in our experiments.

The method most similar to AdaPT is AdaFDR \citep{AdaFDR}, which attempts to directly learn an optimal rejection threshold surface from a parameterized family of candidate thresholds, specifically a linear combination of an exponential function plus several Gaussian bumps. AdaFDR uses the same mirroring technique as AdaPT to estimate the FDP for each threshold, but instead of masking $p$-values they use a cross-fitting scheme after splitting the data set. As a result, there is no finite-sample FDR control guarantee, but they do attain FDP control asymptotically. In our simulations, we find that AdaFDR occasionally violates FDR control, primarily for larger values of $\alpha$.

Finally, we compare the covariate-assisted methods to three methods that do not accept covariates: the well-known Benjamini--Hochberg \citep{BH} and Storey-BH \citep{Storey} methods and the adaptive shrinkage (ASH) method of \citet{ASH}, an empirical Bayes method that uses $z_i$ and $\sigma_i^2$ as input, and assumes the distribution of parameters $\theta_i$ is a mixture of a point mass at zero and $K$ Gaussian distributions centered at zero with predetermined variances. ASH differs from other methods by requiring the assumption that the effect sizes are unimodal. They fit $\pi_k$, the probability of each Gaussian component, by penalized maximum likelihood estimation, and use the fit distributions to estimate the lfdr, or to control the usual FDR by averaging as in \eqref{eq: average FDR}.

To facilitate comparisons with other methods, some of which do not allow for multiple covariates, all of the experiments in this section involve a single covariate. However we emphasize that a major advantage of AdaPT is its ability to incorporate many covariates at a time. Unlike methods that rely on asymptotic convergence of the empirical Bayes model parameters, AdaPT allows for multivariate or high-dimensional modeling of predictor variables. Table~\ref{table: FDR methods} systematically compares the methods as to their inputs, the nature of their FDR guarantees, and their general approach. 

While all methods except the BH procedure share the assumption that the $p$-values be independent of each other (positive dependence is enough for BH; see \citet{benjamini2001control}), several methods require additional assumptions. Like \gmm, AdaFDR requires that null $p$-values have a non-decreasing density. The original AdaPT requires the mirror-conservative assumption, which is weaker than non-decreasing density and holds under roughly the same sufficient conditions. The ASH method, which takes $z$-values as its inputs rather than $p$-values, requires that the $z$-values follow a unimodal distribution.

{

\begin{table}
\centering
\caption{Comparison of multiple testing methods used in empirical studies.}
\begin{threeparttable}
\begin{tabular}{llcll}
\toprule
  &Method & Inputs & FDR Guarantee & General Approach\\
  \midrule
  \multicolumn{5}{l}{\textbf{Generic covariates}}\\
  &\gmm & $z_i, \sigma_i^2$& Finite-sample&Est. optimal threshold\\

  &AdaPT & $p_i$ & Finite-sample& Est. optimal threshold\\

  &AdaFDR &$p_i$& Asymptotic FDP control 

  &Est. optimal threshold\\[5pt]

  &SABHA & $p_i$
  & Finite-sample\tnote{*} &Estimate  $\pi_0(x)$\\
  
  &BB-FDR & $p_i$& None &Estimate  $\pi_0(x)$\\
  
    &BL & $p_i$ & Asymptotic\tnote{\dag} & Estimate  $\pi_0(x)$\\
  
  &FDRreg & $z_i$ & None &Estimate  $\pi_0(x)$\\[10pt]

  \multicolumn{5}{l}{\textbf{Categorical covariate (groups)}}\\

   &IHW & $p_i$& Finite-sample\tnote{\ddag} &Estimate $\pi_0(x)$\\

  &LFDR & $p_i$& Asymptotic\tnote{\dag} &Estimate $\text{lfdr}(p \mid x)$\\[10pt]

  \multicolumn{5}{l}{\textbf{No covariates}}\\
  &BH & $p_i$ & Finite-sample & Est. const. threshold \\
  &Storey--BH & $p_i$ & Finite-sample & Est. const. threshold, $\pi_0$\\
  &ASH & $z_i,\sigma_i^2$& None& Estimate lfdr\\

  \bottomrule
\end{tabular}
\begin{tablenotes}
\item[*] For finite-sample FDR control, SABHA requires a correction based on the Rademacher complexity of the estimator for the null proportion.
\item[\dag] For their asymptotic FDR control guarantee, BL and LFDR require asymptotically consistent estimators for $\pi_0$ and $\text{lfdr}$ respectively. 
 \item[\ddag] By default, the R package for IHW implements an earlier version of the method that operates on uncensored $p$-values and controls FDR asymptotically.
\end{tablenotes}
\end{threeparttable}
\label{table: FDR methods}
\end{table}
}

\subsection{Logistic Simulations}\label{subsec: logistic simulations}

To empirically evaluate the methods described above, we simulate a scenario with a univariate predictor that, when large, indicates the likely presence of a non-null signal. Conditional on a covariate $x_i\sim\cN(0,1)$, we either sample the parameter of interest $\theta_i$ from a logistic distribution or set it equal to zero. The $(x_i, \theta_i, z_i)$ triples for $i=1,\ldots,n=3000$ are sampled independently from the model:
\begin{align*}
    x_i &\sim \cN(0,1)\\
    \gamma_i \mid x_i &\sim \text{Bern}(\pi_1(x_i)), \qquad \pi_1(x) = \frac{3}{4} \cdot \frac{e^{6x-9}}{1+e^{6x-9}}\\[5pt]
    \theta_i \mid x_i, \gamma_i &\sim \begin{cases}
       \mathrm{Logistic}\left(2,\frac{1}{2}\right) &\text{ if } \gamma_i = 1\\
       0 &\text{ if } \gamma_i = 0
    \end{cases}\\[5pt]
    z_i \mid x_i, \gamma_i, \theta_i &\sim \mathcal{N}(\theta_i,1).
\end{align*}

We chose a scaled logistic function for $\pi_1$, and the logistic distribution for the alternative density of $\theta_i$, so that none of the models are correctly specified. The parameters are chosen so that the covariate is very informative; the signal is strong enough to be detected by a method that makes good use of the covariate, but not strong enough for a method to make too many detections otherwise.

Under the same simulation setting, we consider three testing problems: (i) testing each point null $H_i:\;\theta_i = 0$ against the two-sided alternative $\theta_i \neq 0$, (ii) testing the one-sided null $H_i:\; \theta_i \leq 0$ against the one-sided alternative $\theta_i > 0$, and (iii) testing the interval null $H_i:\; |\theta_i| \leq 1$ against the two-sided alternative $|\theta_i| > 1$. In each case we calculate the standard $p$-value transform so that $p_i$ is uniform at the boundary of the null, but has strictly increasing density if $\theta_i$ is in the interior of the null. The $p$-value transform for the interval null is $p_i = 1-\Phi( |z_i| +1)+\Phi(-| z_i| + 1)$.

\begin{figure}
\centering
  \includegraphics[width=\columnwidth]{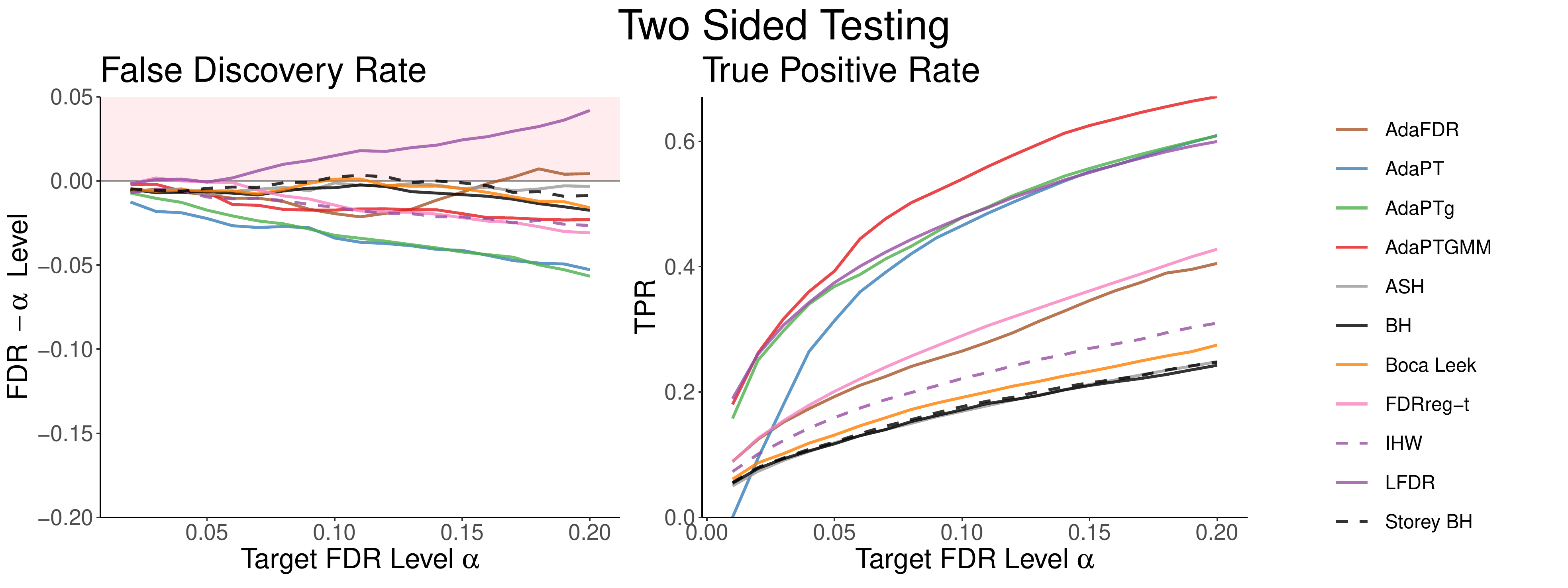}
\caption{FDR and TPR comparisons for testing the point null $H_i:\; \theta_i = 0$ in the logistic simulation. Each method is averaged over 100 initializations. Most of the covariate-assisted methods improve on the methods that do not use the covariates. The three variants of AdaPT achieve the highest power along with the LFDR method, which violates FDR control.}
\label{fig: logistic exp two sided}
\end{figure}
  
  \begin{figure}
\centering
  \includegraphics[width=\columnwidth]{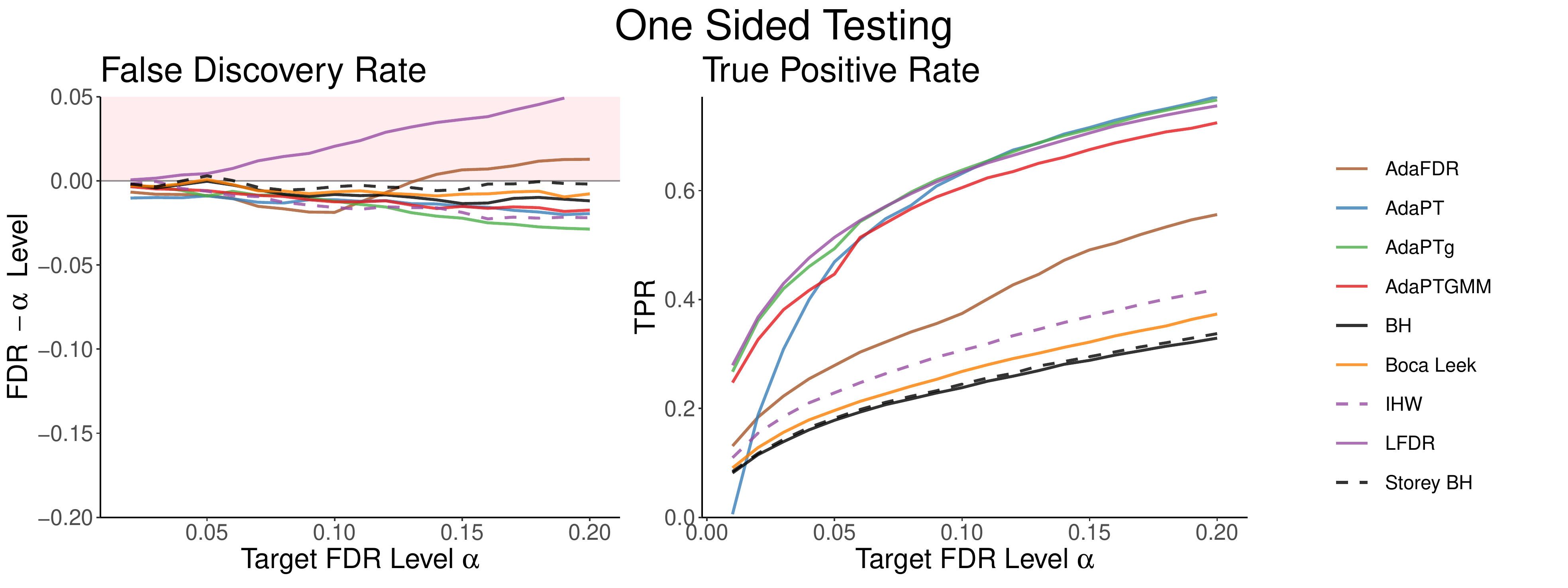}
\caption{FDR and TPR comparisons for testing the one-sided null $H_i:\; \theta_i \le 0$ in the logistic simulation. Each method is averaged over 100 initializations. The ordering between methods is similar to Figure~\ref{fig: logistic exp two sided}: all covariate-assisted methods improve on the methods that use no covariates, with the AdaPT methods improving the most along with LFDR, which does not control FDR. We also see that AdaFDR seems to be violating FDR control, due to solely asymptotic FDP guarantee.}
\label{fig: logistic exp one sided}
\end{figure}

\begin{figure}
\centering
  \includegraphics[width=\columnwidth]{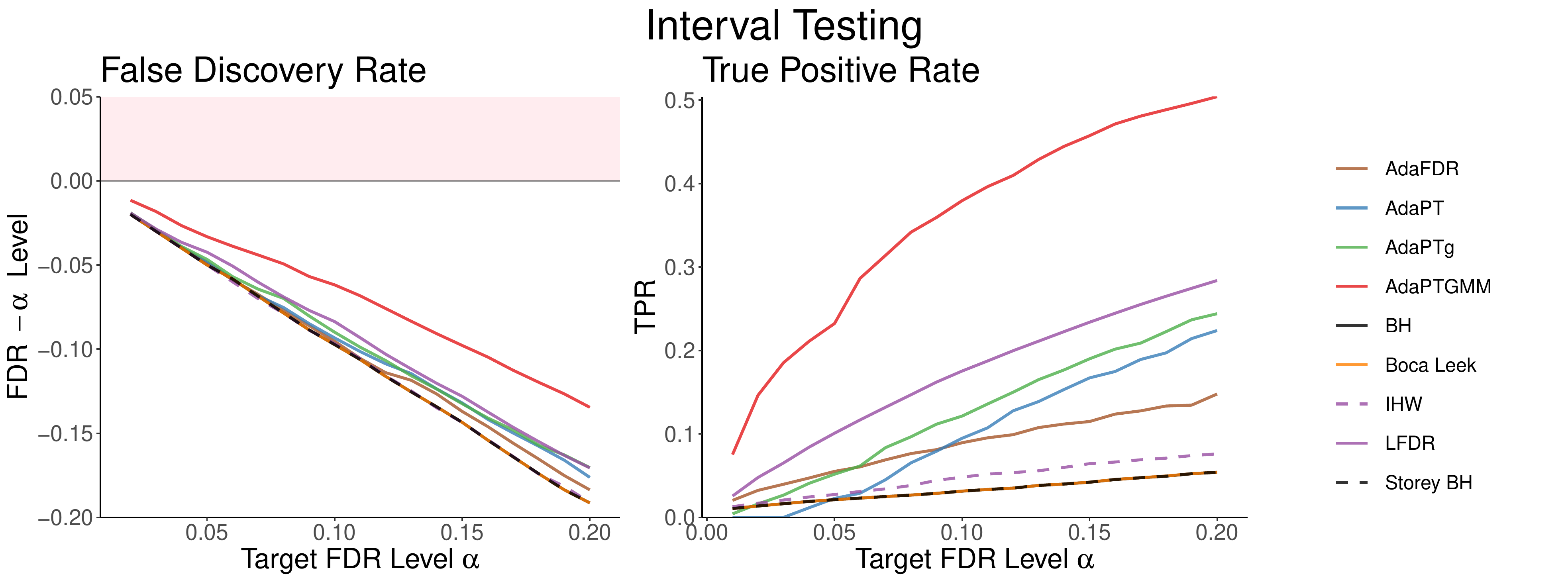}
\caption{FDR and TPR comparisons for testing the interval null $H_i:\; |\theta_i| \leq 1$ in the logistic simulation. Each method is averaged over 100 initializations. For this problem, \gmm achieves substantial gains over competing methods. AdaPT and \adaptg struggle the implementation models the distribution of null p-values as uniform.}
\label{fig: logistic exp interval}
\end{figure}

The three $p$-value distributions are shown in Figure~\ref{fig: logistic p-value dist} in Appendix~\ref{app: full case studies}. In particular, we see a super-uniform null p-value distribution for the interval null.

\subsection{Empirical studies from \citet{Korthauer}} \label{sec: Experiments}

\citet{Korthauer} evaluate most of the methods in Table~\ref{table: FDR methods} as to their performance on a wide range of simulation and real data experiments. We reproduce the simulation experiments in Appendix~\ref{app: full case studies}. In particular, they evaluate the methods' power on 32 settings involving different data sets and covariates. The data sets they study involve a wide variety of computational biology tasks including differential binding testing in ChIP-seq, gene set analysis (GSEA), genome-wide association testing (GWAS), differential abundance testing in microbiome data, bulk RNA-seq, and differential expression in single-cell RNA-seq (scRNA-seq). We discuss the data sets and covariates  in Appendix \ref{app: case study data sets} and refer the reader to \citet{Korthauer} for further details. We exclude LFDR, FDRreg, and ASH due to inconsistent FDR control mentioned in Section~\ref{subsec: FDR Simulations}. For transparency, we include a full heatmap with LFDR, FDRreg, and ASH in Appendix~\ref{app: full case studies}.

Figure~\ref{fig: Case Study experiments} is a heatmap with rows corresponding to FDR procedures and columns corresponding to a case study and covariate. Each square is colored from a gradient of white to dark blue, where darker squares correspond to more powerful methods. The most powerful method for each case study also has a white text label of the percentage of hypotheses rejected. We include \gmm as `adapt-gmm\_g' and the old implementation of AdaPT is labeled `adapt-glm'.

For the four experiments with test statistics and standard errors available (GWAS and RNA-seq data sets), we run \gmm using a two-sided null hypothesis and symmetric modeling assumption, as mentioned in Section~\ref{subsec: conditional gaussian mixture model}, using the standard error as an additional covariate.

In experiments with a small numbers of rejections, AdaPT can perform very poorly, a further example of shortcoming~\ref{shortcomings: min hypo}. Whereas previously AdaPT would make zero rejections, \gmm is able to perform well, even being the most powerful method in some situations. Overall, \gmm achieves the greatest power in $20$ out of $32$ case studies.

\begin{landscape}

\begin{figure}[h]
\centering
  \includegraphics[width=\columnwidth]{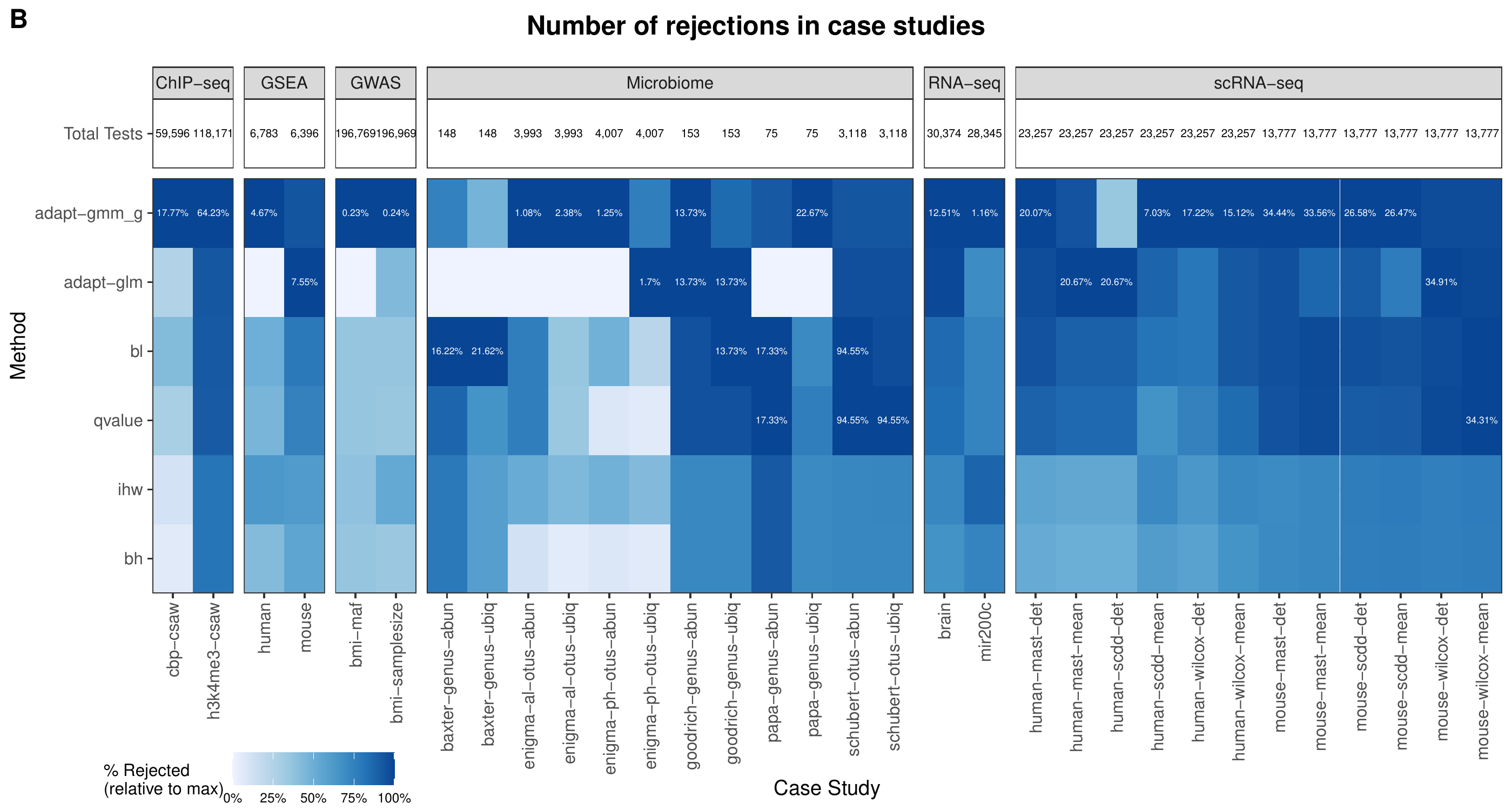}
  \caption{ Table of power in 32 case studies involving real data from a variety of experiments, compiled by \citet{Korthauer}. Each row corresponds to a multiple testing procedure and each column corresponds to a case study involving a data set and a single covariate. Darker squares correspond to more rejections and squares with white text correspond to the maximum rejection percentage within the column. This figure reproduces the main figure of \citet{Korthauer}, including \gmm and excluding LFDR, ASH, and FDRreg, which they found did not reliably control FDR in simulations. The full figure including LFDR, ASH, FDRreg, and AdaPT-GLM$_g$ can be found in Appendix~\ref{app: full case studies}. When $z$-values are unavailable for \gmm we construct them from the $p$-values via $z_i = \Phi^{-1}(1-p_i)$, where $\Phi$ is the standard Gaussian distribution function.}
\label{fig: Case Study experiments}
\end{figure}

\end{landscape}

\section{Discussion} \label{sec: Discussion}

\subsection{Extension to other parametric models}\label{sec:extensions}

While we implement \gmm as a method in the special case with Gaussian test statistics, it could be repurposed with minor modifications to other parametric models. In general, suppose $z_i \sim f_{\theta_i}(z)$ independently, for some parameter $\theta_i \in \RR$, that $p_i(z_i)$ is a $p$-value for testing some $H_i$ concerning the value of $\theta_i$, whose density is non-decreasing under $H_i$ so that \adaptg applies. Then by analogy to \gmm we could estimate a conditional mixture model for $\theta_i$ given $x_i$, using either Gaussian mixture components as in \gmm or replacing them with another distribution such as a conjugate prior for $\theta_i$. In either case, we can use the same EM framework as described in Section~\ref{sec: Implementation}, alternating between estimating the probability that each data point comes from each mixture component (the E step) and estimating the parameters of each mixture component along with a generic off-the-shelf classifier (the M step).

\subsection{Dependent $p$-values}

The greatest remaining weakness of \gmm (and also of AdaPT, and all the other competing methods discussed herein) is its assumption of independence across the $p$-values. This assumption is unrealistic in practice in many of the most common applications of multiple testing including GWAS, microarray studies, and fMRI studies, so relaxing it is an important direction for future research. \citet{DependenceFDRFithian} offers new technical tools for controlling FDR under dependence, including in settings with data-adaptive $p$-value weights, that may prove fruitful in relaxing the independence assumption. The adaptive knockoff method of \citet{ren2020knockoffs} suggests another possible way forward by incorporating side information into multiple testing in supervised learning problems.

\subsection{Summary}

The \adaptg method improves on the original AdaPT framework by generalizing its masking function. It inherits AdaPT's flexibility and its robust FDR control guarantee, and resolves the original method's two main performance issues: low power in small samples and null $p$-values close to 1. Our implementation, \gmm, is tailored to the common setting where effects are estimated with Gaussian errors, and is especially appropriate for testing composite null hypotheses such as one-sided or interval nulls. The method models the conditional distribution of $z$-statistics given covariates using a Gaussian mixture model with mixing proportions that depend on the covariates. Our EM estimation framework is compatible with any off-the-shelf method for modeling multinomial probabilities, and we implement it using a neural network with one hidden layer.

In reproduced experiments from \cite{Korthauer}, \cite{AdaFDR} and in new simulations, we find that \gmm and \adaptg is more powerful and reliable than AdaPT, and consistently delivers power that rivals or exceeds other state-of-the-art methods. We provide a package \texttt{AdaPTGMM} and provide user-friendly default parameters for ease of use.

Perhaps the greatest advantage of \gmm is that, like other AdaPT methods, its flexibility allows it to be extended in a great many ways depending on the problem specifics. Any classifier can be swapped in for our neural network, and the Gaussian mixture model can even be swapped out for any other conditional density estimation model, without threatening the finite-sample FDR guarantee.
\subsection*{Reproducibility}
The code to reproduce the experiments and simulations in this paper is publicly available at \url{https://github.com/patrickrchao/AdaPTGMM_Experiments}.
\section*{Acknowledgments}

William Fithian is partially supported by the NSF DMS-1916220 and a Hellman Fellowship
from Berkeley. We are grateful to Lihua Lei, Jelle Goeman, Aaditya Ramdas, Boyan Duan, Patrick Kimes, Ronald Yurko, Max Grazier-G'Sell, Nikos Ignatiadis, and Kathryn Roeder for insights we have gleaned from helpful conversations with them, and to Patrick Kimes for his help in reproducing the experiments from \citet{Korthauer}.

\clearpage

\bibliography{references}  
\clearpage
\appendix
\section{Appendix}
\subsection{FDR Control} \label{app: FDR control}
First, define the following variables. 
\begin{align*}
  C_t&=\{i\in\mathcal{H}_0 \text{ and } i \in \cM_t \}\\
  b_i&=\mathbbm{1}\{\lambdamin\le p_i\le \lambdamax\}\\
  U_t&=\sum_{i\in C_t} b_i\\
  V_t &=\sum_{i\in C_t} 1-b_i=|C_t| - U_t \numberthis \label{eq: V_t}
\end{align*}
Following from the proof from \cite{AdaPT}, define the filtration $\mathcal{F}$ for $t=0,1,\ldots,$:
\begin{align*}
  \mathcal{F}_t&=\sigma((x_i,m_{i})_{i=1}^n,A_t,R_t,\cM_t,\{b_i:i\in \cM_t^\setcomp\})\\
  \mathcal{F}_{-1}&=\sigma((x_i,m_{i})_{i=1}^n,\cM_0,\{b_i:i\in \cM_0^\setcomp\})
\end{align*}
and lastly define the $\sigma$-fields
\begin{align*}
  \mathcal{G}_{-1}&=\sigma((x_i,q_i)_{i=1}^n,(b_i)_{i\not\in \mathcal{H}_0})\\
  \mathcal{G}_{t}&=\sigma(\mathcal{G}_{-1},C_t,(b_i)_{i\not\in C_t},U_t).
\end{align*}

These imply that given $b_i$ and $m_i$, we can recover $p_i$.
\begin{align*}
p_i&=\mathbbm{1}\{b_i=1\}(\lambdamin+(\alpha-m_i)\zeta\}+\mathbbm{1}\{b_i=0\}m_i.
\end{align*}

\begin{lemma}\label{lemma: probability of big}
Assume that the null $p$-values are mutually independent and independent from the non-null $p$-values, and assume the null $p$-values have non-decreasing density. Then for $i\in\mathcal{H}_0$,
\begin{align}
  \PP(b_i=1\mid \mathcal{G}_{-1})\ge\frac{\zeta}{1+\zeta}.
\end{align}
\end{lemma}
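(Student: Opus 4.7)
The plan is to compute $\PP(b_i = 1 \mid \mathcal{G}_{-1})$ directly from the density of the null $p$-value $p_i$ and then invoke the non-decreasing density assumption to lower-bound the ratio. First I would use the mutual independence of the null $p$-values (from each other and from the non-nulls) to argue that conditioning on $\mathcal{G}_{-1}$ effectively reduces to conditioning on $(x_i, q_i)$: all other entries in $\mathcal{G}_{-1}$ are independent of $b_i$ and can be integrated out. I would interpret $q_i$ here as the two-element (unordered) preimage of $m_i = g(p_i)$ under the masking function, which is exactly the information about $p_i$ that $\mathcal{G}_{-1}$ leaves ambiguous for a null index.

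Next, for a null $i$ with $m_i \in [0, \alphamax]$, there are exactly two candidate values consistent with $q_i$: the ``red'' preimage $p_{i,0} = m_i \in [0, \alphamax]$ (giving $b_i = 0$) and the ``blue'' preimage $p_{i,1} = \lambdamax - \zeta m_i \in [\lambdamin, \lambdamax]$ (giving $b_i = 1$). A short Bayes-rule computation, keeping track of the Jacobian of $g$ (which has slope $-1/\zeta$ on the ``blue'' region, so an infinitesimal $dp$ around $m_i$ has preimage widths $dp$ around $p_{i,0}$ and $\zeta\,dp$ around $p_{i,1}$), yields
\[
\PP(b_i = 1 \mid q_i) \;=\; \frac{\zeta\, f_i(p_{i,1})}{f_i(p_{i,0}) + \zeta\, f_i(p_{i,1})},
\]
where $f_i$ is the null density of $p_i$. (For indices with $p_i \notin [0, \alphamax] \cup [\lambdamin, \lambdamax]$, the masked value equals $p_i$ and $b_i$ is deterministic given $q_i$; the lemma is used only in the masked case, which is where this identity is meaningful.)

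Finally, because $p_{i,0} \leq \alphamax \leq \lambdamin \leq p_{i,1}$ and $f_i$ is non-decreasing under $H_i$ by assumption, $f_i(p_{i,1}) \geq f_i(p_{i,0})$. Substituting this inequality gives
\[
\PP(b_i = 1 \mid q_i) \;\geq\; \frac{\zeta\, f_i(p_{i,0})}{f_i(p_{i,0}) + \zeta\, f_i(p_{i,0})} \;=\; \frac{\zeta}{1+\zeta},
\]
and the tower property delivers the same bound for $\PP(b_i = 1 \mid \mathcal{G}_{-1})$.

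The main obstacle is bookkeeping rather than analytic difficulty: one needs to correctly identify how much information about $b_i$ is contained in $\mathcal{G}_{-1}$ (reducing the problem to conditioning on the unordered preimage set $q_i$), and to handle the Jacobian factor $\zeta$ arising from the fact that the blue region has width $\zeta$ times the red region so that a uniform $p$-value is $\zeta$ times as likely to land in the blue preimage. Once those setup details are correct, the non-decreasing density assumption makes the inequality immediate.
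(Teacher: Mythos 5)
Your proof is correct, and it takes a route that is close in spirit to the paper's but not identical. The paper's proof works with \emph{aggregate interval probabilities}: it uses the non-decreasing density to establish $\zeta\,\PP(p_i\in[0,\alphamax]) \le \PP(p_i\in[0,\zeta\alphamax]) \le \PP(p_i\in[\lambdamin,\lambdamax])$ (inserting the intermediate interval $[0,\zeta\alphamax]$, which has the same width as the blue region), and then plugs this ratio of red-region to blue-region mass into the expression for $\PP(b_i=1\mid\mathcal{G}_{-1})$. You instead condition on the unordered preimage pair $\{p_{i,0},p_{i,1}\}$ determined by $m_i$ and compute the conditional probability \emph{pointwise} as $\zeta f_i(p_{i,1})/(f_i(p_{i,0})+\zeta f_i(p_{i,1}))$, with the Jacobian factor $\zeta$ accounting for the gentler slope of $g$ on the blue region, and then use $f_i(p_{i,1})\ge f_i(p_{i,0})$ since $p_{i,0}\le\alphamax\le\lambdamin\le p_{i,1}$. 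Your pointwise version is actually the sharper statement: since $\mathcal{G}_{-1}$ records the masked values themselves (not merely membership in the masking set), the bound must hold conditionally on each value of $m_i$, which is exactly what your density-ratio computation delivers and which the paper's interval comparison only yields after integrating over $m_i$; your version implies the paper's by integration but not conversely. Your parenthetical caveat that the bound is only meaningful (and only needed) for masked indices is also apt, since for $p_i\in(\alphamax,\lambdamin)\cup(\lambdamax,1]$ the bit $b_i$ is degenerate. One small point of care: in the final substitution you should note that $a\mapsto \zeta a/(b+\zeta a)$ is increasing in $a$, so replacing $f_i(p_{i,1})$ by the smaller value $f_i(p_{i,0})$ in both numerator and denominator does give a lower bound; as written the step is correct but worth a word of justification.
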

\begin{proof}
Since we know that since the density of null $p$-values is non-decreasing,
\begin{align}
  \zeta\PP(p_i\in[0,\alphamax]) \le \PP(p_i\in[0,\alphamax\zeta])\le \PP(p_i\in[\lambdamin,\lambdamax]). \label{eq: lemma density non-decreasing}
\end{align}
From the definition of $\PP[b_i=1\mid \mathcal{G}_{-1}]$,
\begin{align*}
  \PP(b_i=1\mid \mathcal{G}_{-1}) &= \frac{\PP(p_i\in[\lambdamin,\lambdamax]\mid \mathcal{G}_{-1}]}{\PP(p_i\in[\lambdamin,\lambdamax) \mid\mathcal{G}_{-1}]+\PP(p_i\in[0,\alphamax]\mid \mathcal{G}_{-1})}\\
  &\ge 
  \frac{\PP(p_i\in[\lambdamin,\lambdamax]\mid \mathcal{G}_{-1})}{\PP(p_i\in[\lambdamin,\lambdamax] \mid\mathcal{G}_{-1})+\PP(p_i\in[\lambdamin,\lambdamax]\mid \mathcal{G}_{-1})/\zeta}\\
  &=\frac{\zeta}{\zeta+1}.
\end{align*}
\end{proof}

\begin{proof}[Proof of \textbf{Theorem~\ref{theorem: fdr control}}]

Let $\hat t$ be the stopping time when 
\[\widehat{\text{FDP}}_{\hat t} = \frac{1+A_{\hat t}}{\zeta R_{\hat t}} \le \alpha.\]
Let $V_t$ and $U_t$ correspond to the number of null $p$-values that contribute to $R_t$ and $A_t$ respectively, defined explicitly in equation~\eqref{eq: V_t}.
\begin{align}
\text{FDP}_{\hat t} = \frac{V_{\hat t}}{R_{\hat t}\vee 1} = \frac{(1+U_{\hat t})/\zeta}{R_{\hat t}\vee 1}\frac{V_{\hat t}}{(1+U_{\hat t})/\zeta}\le \frac{(1+A_{\hat t})/\zeta}{R_{\hat t}\vee 1}\frac{V_{\hat t}}{(1+U_{\hat t})/\zeta}=\alpha\zeta\left(\frac{V_{\hat t}}{1+U_{\hat t}}\right).
\end{align}
This follows from $U_{\hat t}\le A_{\hat t}$. We would like to show that $\E[V_{\hat t}/(1+U_{\hat t})]$ can be upper bounded by $\frac{1}{\zeta}$.

By lemma~\ref{lemma: probability of big},
\[\rho_i=\PP(b_i=1\mid \mathcal{G}_{-1})\ge\frac{\zeta}{1+\zeta}.\]
for all $i\in\mathcal{H}_0$. Furthermore, $\mathcal{F}_t\subseteq \mathcal{G}_t$ since
\begin{align*}
  A_t&=U_t + |\{ i\not\in \mathcal{H}_0:\lambdamin\le p_i\le \lambdamax \text{ and } i \in \mathcal{M}_t \} |\\
  R_t&=V_t + |\{ i\not\in \mathcal{H}_0:0\le p_i\le \alphamax \text{ and } i \in \mathcal{M}_t \} |.
\end{align*}
By applying the lemma 2 from AdaPT \cite{AdaPT},
\begin{align*}
  \E[\text{FDP}_{\hat t}\mid \mathcal{G}_{-1}]&\le \alpha\zeta\E\left[\frac{V_{\hat t}}{1+U_{\hat t}}\mid \mathcal{G}_{-1}\right]=\alpha\zeta\E\left[\frac{1+| C_{\hat t}|}{1+U_{\hat t}}-1\mid \mathcal{G}_{-1}\right]\\
  &\le \alpha\zeta\left(\left(\frac{\zeta}{1+\zeta}\right)^{-1}-1\right)=\alpha.
\end{align*}
By taking another expectation and by the law of iterated expectation, \[\FDR = \E[\text{FDP}]\le\alpha.\]
\end{proof}

\subsection{Optimality of Revealing Procedure}
\begin{theorem}\label{theorem: game}
 Consider a game with $n$ cards $c_1,c_2,\ldots,c_n$, each with one side white and the other side either red or blue. On the white side of each card is printed a number $q_i \in [0,1]$ reflecting the probability the other side is blue, so that 
 \[
 B_i = 1\{c_i \text{ is blue}\} \simind \text{Bern}(q_i).
 \]
 The game begins with all cards facing white-side-up, and the player being told how many of the cards are red and how many blue (i.e., $S_0=\sum_i B_i$ is revealed). On each turn $t=1,\ldots,n$, the player chooses one card, flips it over, observes its color, and removes it from the table. The game ends the first time $S_t$, the number of remaining face-down blue cards at the end of the turn, falls below a fixed threshold sequence $s_t$, and the player wants to end the game as soon as possible; i.e., the player wants to minimize $\tau = \min \{t:\; S_t \leq s_t\}$.
 
 Assume without loss of generality that the cards are ordered with $q_1 \geq q_2 \geq \cdots \geq q_n$. Then, regardless of the values taken by $q$ or $s$, or the initial condition $S_0 = \sum_i B_i$, it is optimal for the player to reveal the cards in order $c_1,c_2,\ldots,c_n$.
\end{theorem}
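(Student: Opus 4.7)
The plan is to prove the theorem by induction on $n$, with the inductive step handled by a pairwise interchange argument in the style of classical proofs for optimal scheduling. The base case $n=1$ is trivial. For the inductive step, fix any adaptive strategy $\sigma$ and let $\tau^\sigma$ denote its stopping time; the goal is to show $\tau^{\mathrm{greedy}}$ is stochastically dominated by $\tau^\sigma$. Split on $\sigma$'s first move: if $\sigma$ begins with card $c_1$, the residual game on $\{c_2,\ldots,c_n\}$ is an $(n-1)$-card instance of the same problem with shifted thresholds and updated initial count, and the inductive hypothesis applied to $\sigma$'s continuation gives the desired dominance. If $\sigma$ begins instead with some card $c_i$ for $i>1$, the central step is to produce a companion strategy $\sigma'$ that starts with $c_1$ and satisfies $\tau^{\sigma'} \preceq_{\mathrm{sd}} \tau^\sigma$; chaining with the first case then closes the induction.

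The natural companion is the ``label-swap'' strategy $\sigma'(h)=\pi(\sigma(h))$, where $\pi=(1\;i)$ is the transposition applied to every action in $\sigma$'s decision tree; then $\sigma'(1)=c_1$. To compare $\sigma$ and $\sigma'$ I would use the standard monotone coupling of their first flips: let $U\sim\mathrm{Unif}[0,1]$ and set $B_1=\mathbbm{1}\{U\le q_1\}$, $B_i=\mathbbm{1}\{U\le q_i\}$, with independent fresh randomness for the remaining cards. Since $q_1\ge q_i$, this gives $B_1\ge B_i$ pathwise. On the event $\{B_1=B_i\}$ (probability $1-q_1+q_i$), the label-swap propagates cleanly through every subsequent step: the observation sequences of $\sigma$ and $\sigma'$ coincide after the automatic substitution of labels, their flipped card sets differ only by exchanging $c_1$ for $c_i$ (whose colors agree under the coupling), and so $B_t^\sigma=B_t^{\sigma'}$ for all $t$, giving $\tau^\sigma=\tau^{\sigma'}$ exactly.

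The main obstacle lies in the asymmetric event $\{B_1=1, B_i=0\}$ of probability $q_1-q_i$, in which the two strategies observe opposite values at step one, branch on different subtrees of $\sigma$, and so the label-swap alone no longer maintains synchronization. What saves the argument is that $\sigma'$ enters step two with a guaranteed one-blue head start, $B_1^{\sigma'}-B_1^\sigma=1$, and because every future increment lies in $\{0,1\}$ this lead can only ever erode by one unit per step. The technical crux of the proof is to re-couple the remaining Bernoulli draws — using the same monotone-coupling template as above whenever the lead threatens to be lost — so that either $B_t^{\sigma'}\ge B_t^\sigma$ is preserved throughout, or every hitting event $\{B_u^\sigma\ge f_u\}$ is preceded or matched by some hitting event $\{B_{u'}^{\sigma'}\ge f_{u'}\}$ with $u'\le u$. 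This reduces via the inductive hypothesis to the $(n-2)$-card residual sub-game with thresholds shifted by the head start. Verifying the consistency of this combined three-way coupling (correct marginals, label-swap on the symmetric event, and head-start dominance on the asymmetric event) is the essential technical step; once in hand, it yields $\tau^{\sigma'}\preceq_{\mathrm{sd}}\tau^\sigma$ and completes the induction.
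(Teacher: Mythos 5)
Your high-level skeleton (induction on $n$ plus a pairwise interchange on the first move) matches the paper's, but the central step is left genuinely unresolved, and the route you choose makes it much harder than it needs to be. The key move you are missing is to use the inductive hypothesis \emph{first} to collapse the class of competitors: since the player's first move can depend only on $S_0$, any strategy opens with some deterministic $c_i$, and by the inductive hypothesis applied to the residual $(n-1)$-card game, the best strategy opening with $c_i$ is the \emph{fixed} policy $f^{(i)}=(c_i,c_1,\ldots,c_{i-1},c_{i+1},\ldots,c_n)$. The interchange then only ever needs to compare $f^{(i)}$ with $\tilde f^{(i)}=(c_1,c_i,c_2,\ldots)$, two fixed policies that flip identical card sets at every time $t\neq 1$. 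Hence $S_t$ coincides pathwise for all $t\neq1$ on the same probability space — no coupling is required — and the entire comparison reduces to the single inequality $\PP[B_1=1,B_i=0\mid S_0=s]\geq\PP[B_i=1,B_1=0\mid S_0=s]$, which follows from the likelihood ratio $q_1(1-q_i)/\bigl(q_i(1-q_1)\bigr)\geq1$. Since $\tilde f^{(i)}$ opens with $c_1$, it is in turn dominated by $f^{(1)}$, closing the induction.

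By instead applying a label-swap transposition to an \emph{arbitrary adaptive} $\sigma$, you force yourself to track two strategies that, on the event $\{B_1=1,B_i=0\}$, observe different colors at step one, descend into different subtrees, and subsequently flip different cards. Your proposal explicitly defers the resolution of this case ("the technical crux\ldots is the essential technical step"), but that deferred step is itself a nontrivial lemma — that a one-unit head start in removed blue cards yields stochastic dominance of the hitting time under arbitrary adaptive continuations — and it is not established by the sketch. There is also a concrete flaw in the coupling you do specify: the game conditions on $S_0=\sum_j B_j$, under which the $B_j$ are \emph{not} independent, so setting $B_1=\mathbbm{1}\{U\le q_1\}$, $B_i=\mathbbm{1}\{U\le q_i\}$ "with independent fresh randomness for the remaining cards" does not produce the correct conditional joint law given $S_0$. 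The paper's argument avoids both problems by working directly with the conditional probabilities given $S_0$ and by only ever interchanging the first two steps of fixed policies.
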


We note that this theorem states there is an optimal \textit{fixed} policy, meaning all choices are determined by the starting state, in contrast to an {\em adaptive} policy whose choices depend on already-revealed cards.

\begin{proof}

  We proceed by induction. Let $\mathcal{M}_t \subseteq [n]$ denote the indices of face-down (masked) cards at time $t = 1,\ldots,n$, when choosing the $t$th card to reveal. If 
  \[    
  W_t = \left(B_{\mathcal{M}_t^\setcomp}, S_{t-1}\right), \quad \text{ where } S_{t-1} = \sum_{i \in \mathcal{M}_t} B_i,
  \]
  denotes the information available to the player at time $t$, then a policy $f = \left(f_1(W_1),\ldots,f_{n}(W_{n})\right)$ defines the player's choice of card at each step as a function of the available information. Our claim is that the fixed policy $f^*$ with $f_t^* = c_t$ almost surely is optimal.

  \textbf{Base Case:}\\
  If $n=1$ there is only one possible strategy, which always chooses $c_1$.

  \textbf{Inductive Step:}\\
  First, note that the player has no information to adapt to on turn $t=1$ except the initial condition $S_0$. If the player selects $c_i$, then $B_i$ is revealed and $S_1 = S_0-B_i$. Either $S_1 \leq s_1$ and the game stops immediately with $\tau = 1$, or the player is playing a new version of the game with $n-1$ cards, probabilities $q_1,\ldots,q_{i-1},q_{i+1},\ldots,q_n$, target thresholds $s_2,\ldots,s_n$, minimization target $\tau - 1$, and with the initial condition that there are $S_1$ face-down blue cards. Further, the conditional distribution of $B_{\mathcal{M}_1}$ given $W_1$ is identical to the card color distribution in the new game.
  
  Applying the inductive hypothesis, then, the fixed policy $f^{(i)} = (c_i, c_1, \ldots, c_{i-1},c_{i+1},\ldots,c_n)$ is optimal among all strategies $f$ beginning with $f_1 = c_i$. Our goal is to show that $f^* = f^{(1)}$ is at least as good as $f^{(i)}$ for $i > 1$.
  
  \newcommand{\tfi}{\tilde{f}^{(i)}}
  For $i>1$, consider the fixed policy $\tfi = (c_1,c_i,c_2,\ldots,c_{i-1},c_{i+1},\ldots,c_n)$ which is $f^{(i)}$ with the first two steps swapped. Note $f^{(i)}$ and $\tfi$ result in the same $S_t$ for all $t\neq 1$, and they only differ on $S_1$ if exactly one of $c_1$ and $c_i$ is blue. The only question to ask in comparing these policies is which of them is more likely to reveal the blue one first in this case. As a result, $\tfi$ dominates $f^{(i)}$ provided that
  \[
  \PP\left[B_1 = 1, B_i = 0 \mid S_0=s\right] \geq \PP\left[B_i = 1, B_1 = 0 \mid S_0=s\right],
  \]
  for $1 \leq s \leq n-1$ (otherwise the policies produce identical trajectories).  If $q_1 = 1$ or $q_i=0$, the right-hand probability is 0; otherwise
  \[
  \frac{\PP\left[B_1 = 1, B_i = 0 \mid S_0 = s\right]}{\PP\left[B_i = 1, B_i = 0 \mid S_0 = s\right]} 
    \;=\; \frac{\PP\left[B_1=1, B_i=0,\sum_{j\neq 1,i}^n B_j =s-1\right]}{\PP\left[B_i=1, B_1=0,\sum_{j\neq 1,i}^n B_j =s-1\right]}
    \;=\; \frac{q_1(1-q_i)}{q_i(1-q_1)} \;\geq\; 1,
  \]
  so $S_1\leq s_1$ is at least as likely under $\tfi$, and $S_2,\ldots,S_n$ coincide for the two strategies almost surely. As a result, $\tfi$ is at least as good as $f^{(i)}$, and is also no better than $f^* = f^{(1)}$, which is optimal among policies with $f_1 = c_1$.
  
\end{proof}
\begin{proof}[Proof of \textbf{Theorem~\ref{theorem: optimal revealing with oracle}}]
\label{pf: optimal revealing with oracle}
  We utilize the earlier Theorem~\ref{theorem: game} by selecting 
  $s_{t}=\frac{\alpha\zeta (\vert \cM_0\vert- t) - 1}{1+\alpha\zeta}.$ In the presentation of the previous theorem, we defined $S_t$ as the number of blue cards remaining, we may redefine $S_t$ as $A_t$ in the formulation of AdaPT, the number of masked hypotheses in the blue region.
  \begin{align*}
      A_{\hat t}\le s_{\hat t}\;&=\;\frac{\alpha\zeta (\vert \cM_0\vert- \hat t) - 1}{1+\alpha\zeta}\\
      A_{\hat t} + 1\;&\le\; \alpha\zeta (n- \hat t-A_{\hat t})\\
         \frac{(A_{\hat t}+1)/\zeta}{R_{\hat t}}\;& \le\; \alpha,
  \end{align*}
  since $A_t+R_t=\vert \cM_0\vert-t$. Therefore by selecting $s_{t}$, the most powerful procedure is to reveal the masked hypotheses in decreasing order of $q_i$, or reveal hypotheses most likely to be blue.
\end{proof}

\subsection{Masking Function Shapes}\label{app: masking function shapes}
\begin{figure}[t]
\centering
\begin{subfigure}{.49\textwidth}
  \centering
  \includegraphics[width=0.9\columnwidth]{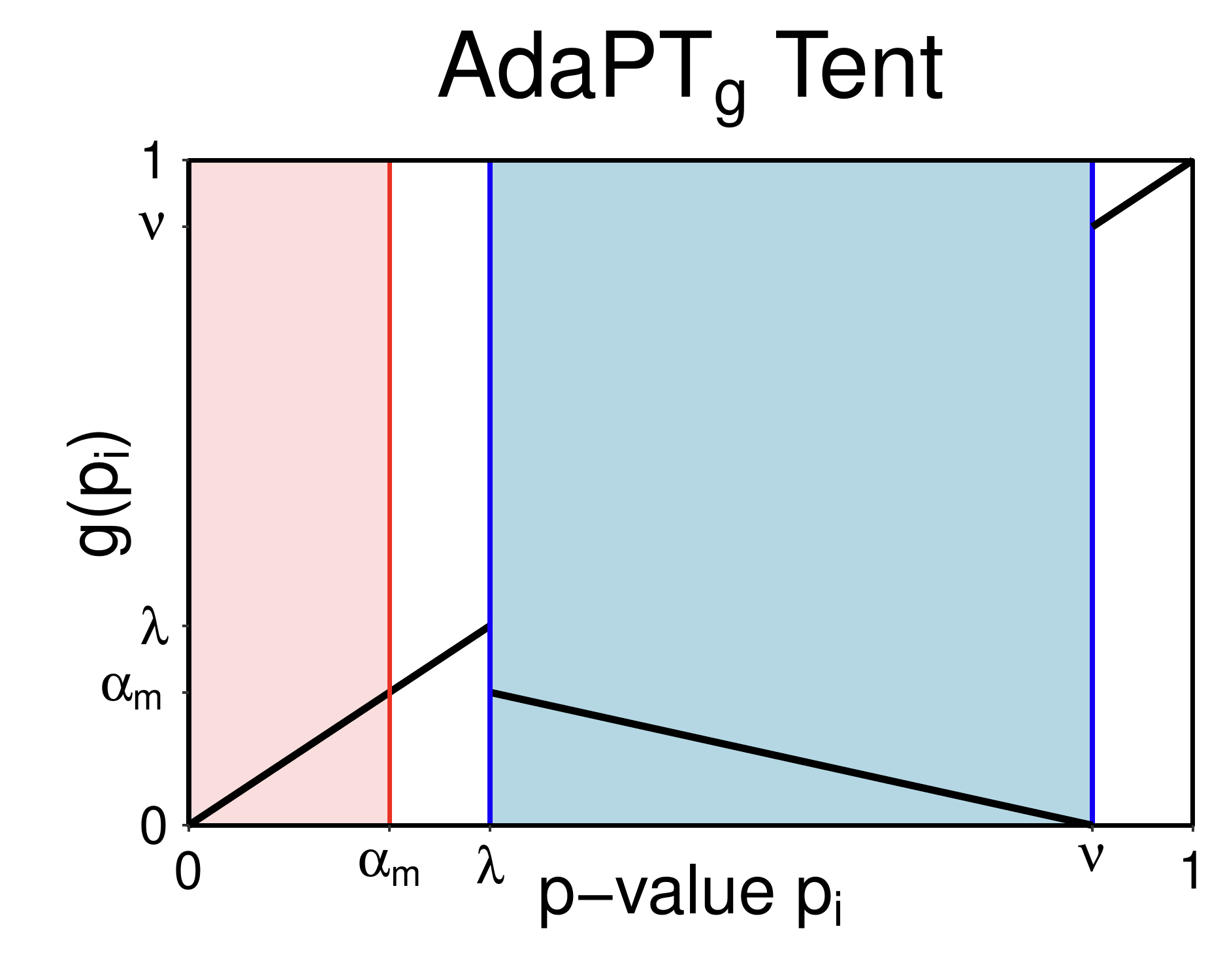}
\end{subfigure}\hfill
\begin{subfigure}{.49\textwidth}
  \centering
  \includegraphics[width=0.9\columnwidth]{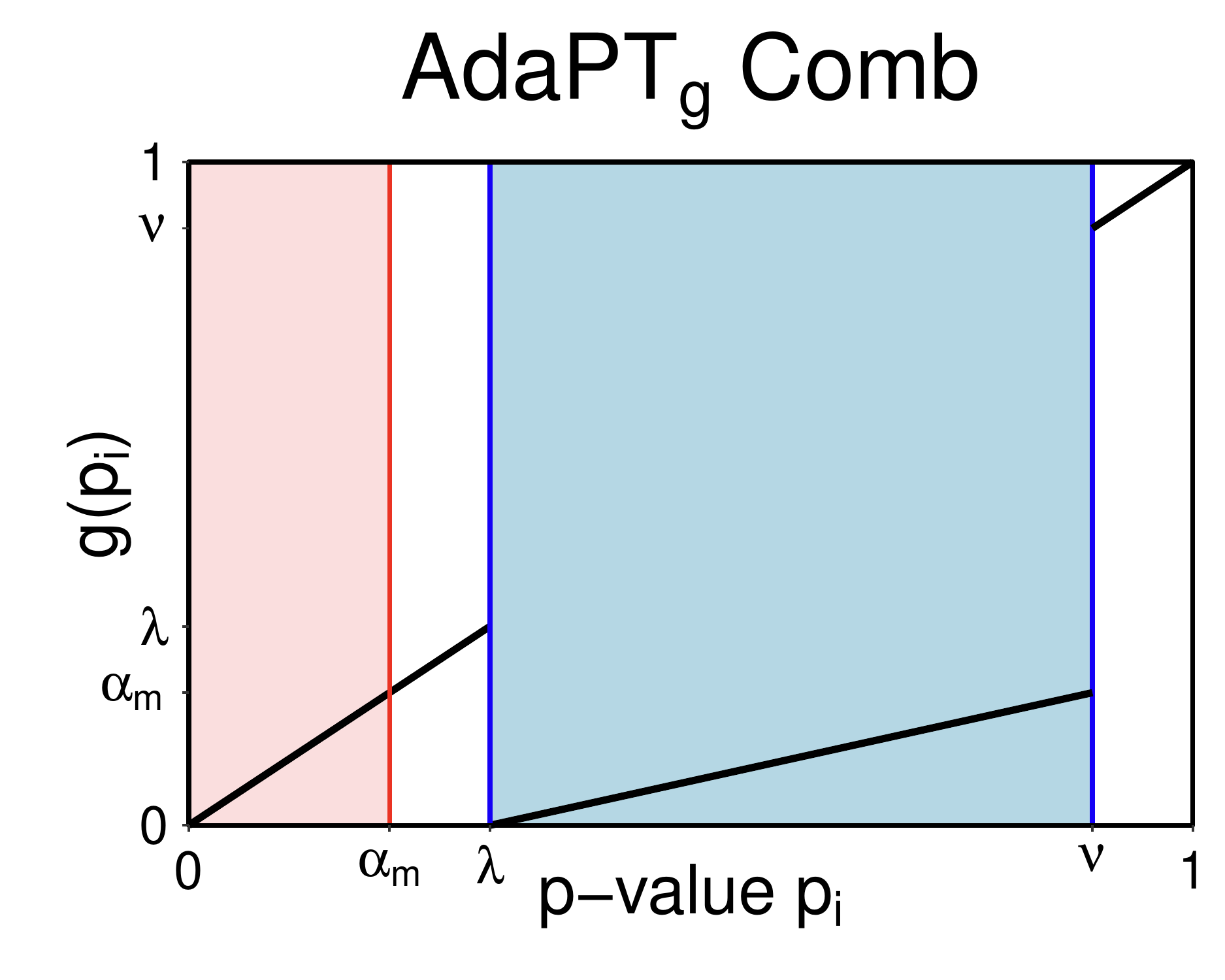}
\end{subfigure}
\caption{Masking function examples. The colored regions correspond to $p$-values that are masked and the y-axis are the masked $p$-values, what the analyst would observe. Left: Tent masking function, Right: Comb masking function.}
\label{fig:Tent Comb Masking Functions}
\end{figure}
The masking function $g(p)$ maps the region $[0,\alpha_m]$ to $[\lambda,\nu]$. Given a linear mapping, there is flexibility in which direction to map, namely $g(0)=g(\nu)$ or $g(0)=g(\lambda)$. We denote the first mapping as `tent' shaped and the second as `comb' shaped, where the naming convention follows from the shape of the red and blue regions in Figure~\ref{fig:Tent Comb Masking Functions}.
\begin{equation*}
    g_{\text{tent}}(p) = \begin{cases} 
    (\lambdamax - p)/\zeta & p \in [\lambdamin , \lambdamax]\\
    p & \text{otherwise}
    \end{cases},\qquad  
    g_{\text{comb}}(p) = \begin{cases} 
    (p-\lambda)/\zeta & p \in [\lambdamin , \lambdamax]\\
    p & \text{otherwise}
    \end{cases}
\end{equation*}
The main consideration of whether to use a tent or comb masking function boils depends on the density of $p$-values. In particular, since we would like to differentiate between alternative $p$-values near $0$ and $g^{-1}(0)$, we would like to map the region with lowest density to $0$.

For many situations, the tent masking function is sufficient. If the alternative $p$-value inflates the density at $\lambda$, then the tent masking function will perform better. However if the density of null $p$-values is super uniform, i.e. under interval nulls, then the null $p$-value density at $\nu$ may be substantially greater than at $\lambda$, meaning a comb masking function would be superior.

To account for these differences, we set the default masking shape to be the tent shape, and for interval testing we choose the comb shape.
\section{Optimization Details}
\subsection{Weighting terms $w_{ikb}$}\label{app: weighting terms w ikb}
The probability $w_{ikb}$ may be split into two cases, when $p_i$ is known and unknown.

\textbf{Unknown:}\\
\begin{align*}
  w_{ikb}&=\PP[\gamma_i=k,b_i=b \mid x_i, m_i]\\
  &=\frac{\PP[\gamma_i=k,b_i=b,m_i\mid x_i]}{\PP[m_i\mid x_i]}\\
  &=\frac{\PP[\gamma_i=k\mid x_i]\PP[b_i=b,m_i\mid \gamma_i=k]}{\sum_{k',b'}\PP[\gamma_i=k'\mid x_i]\PP[b_i=b',m_i\mid \gamma_i=k']}. \numberthis \label{eq: w ika computation}
\end{align*}

\textbf{Known:}\\
\begin{align*}
\PP[\gamma_i=k\mid x_i,p_i]&=\PP[\gamma_i=k\mid x_i,p_i]\\
  &=\frac{\PP[\gamma_i=k, p_i\mid x_i]}{\PP[p_i\mid x_i]}\\
  &=\frac{\PP[\gamma_i=k\mid x_i]\PP[ p_i\mid \gamma_i=k]}{\sum_{k'}\PP[\gamma_i=k'\mid x_i]\PP[p_i\mid \gamma_i=k']}. \numberthis 
\end{align*}
The terms $\PP[b_i=b,m_i\mid \gamma_i=k]$ are evaluated in Appendix~\ref{app: one sided p-value Probability conditioned on class}, the terms $\PP[\gamma_i=k\mid x_i]$ follow from the fitted probabilities of the multinomial model.

\subsection{One Sided: $p$-value Probability Conditioned on Class}\label{app: one sided p-value Probability conditioned on class}
Let $\Phi(z)$ be the CDF of a standard normal up to $z$. In equation~\eqref{eq: w ika computation}, we would like to evaluate
\[\PP[b_i=b,m_i\mid \gamma_i=k].\]
For simplicity, we may consider $b=1$ and testing one-sided null hypotheses,
\begin{align} \label{eq: one sided p-value}
p_i=q(z_i)=1-\Phi\left(\frac{z_i}{\sigma_i}\right).
\end{align}
Let $f_k$ be the density for $z$ given $\gamma=k$.
\begin{align*}
  \PP[b_i=1,m_i\in (p_i\pm dp) \mid \gamma =k]&=\PP[p_i\in g^{-1}(p_i\pm dp),b_i=1 \mid \gamma = k]\\
  &=\PP[p_i\in (p_{i,1}\pm \zeta dp) \mid \gamma = k]\\
  &= \zeta dp f_k( q^{-1}(p_{i,1})) \frac{d}{dp} \left[q^{-1}( p_{i,1})\right]\\
  &=\zeta dp \frac{f_k(z_{i,1})}{ | q'(q^{-1}( p_{i,1}))| }\\
  &=\zeta dp \frac{f_k(z_{i,1})}{| q'(z_{i,1})|}
\end{align*} 

Since $z\sim \mathcal{N}(\theta,\sigma_i^2)$, $f_k(z_{i,1})=\phi(z_{i,1}; \mu_k,\tau_k^2+\sigma_i^2)$.
\begin{align}
  \PP[b_i=b,m_i\mid \gamma =k] &\propto \frac{\phi( z_{i,b}; \mu_k,\tau_k^2+\sigma_i^2) \zeta^b}{\phi( z_{i,b}; 0,\sigma_i^2)}. \label{eq: probability p-value}
\end{align}

\section{Testing Point Null} \label{app: two sided testing}
Consider the two sided null hypothesis $H_i=\theta_i =0$ with alternative $\theta_i\neq 0$.
\[H_0: \theta =0, \quad H_1:\theta \neq 0.\]
Then we have
\begin{align} \label{eq: two sided p-value}
p_i=q(z_i)=2\left(1-\Phi\left(\left| \frac{z_i}{\sigma_i} \right|\right)\right) .
\end{align}

As mentioned in Section~\ref{subsec: Masking Function Families}, there are two pieces of missing information, $\sgn(z_i)$ and $b_i$, with a total of four possibilities. We choose to reveal we reveal $\sgn(z_i)(-1)^{b_i}$, leaving two possibilities. We choose to reveal $\sgn(z_i)(-1)^{b_i}$ rather than $\sgn(z_i)$ as it increases the distance between the candidate values of $z_i$. We may retain our previous notation of $p_{i,b}$ and $z_{i,b}$ corresponding to which of the two possibilities has $b_i=b$.

Now evaluating equation~\eqref{eq: w ika computation} and assuming $b_i=1$,
\begin{align*}
  \PP[b_i=1,m_i\in (p_i\pm dp) \mid \gamma =k]&=\PP[p_i\in g^{-1}(p_i\pm dp),b_i=1 \mid \gamma = k]\\
  &=\PP[p_i\in (p_{i,1}\pm \zeta dp) \mid \gamma = k]\\
  &= \zeta dp f_k( q^{-1}(p_{i,1})) \frac{d}{dp} \left[q^{-1}( p_{i,1})\right]\\
  &=\zeta dp \frac{f_k(z_{i,1})}{ | q'(q^{-1}( p_{i,1}))| }\\
  &=\zeta dp \frac{f_k(z_{i,1})}{| q'(z_{i,1})|}
\end{align*} 

This gives us
\begin{align}
  \PP[b_i=b,m_i\mid \gamma =k] &\propto \frac{\phi( z_{i,b} ; \mu_k,\tau_k^2+\sigma_i^2) \zeta^b}{2\phi( | z_{i,b}| ; 0,\sigma_i^2)}. \label{eq: two sided probability p-value}
\end{align}

\section{Interval Null Testing}\label{app: interval null testing}

We also provide functionality for testing interval null hypotheses,
\[H_i: | \theta_i| \le \delta.\]
Given observed z-scores $z$, the test statistic is $| z |$. We may evaluate the $p$-value from the property that Gaussian distributions are symmetric about the mean and are location scale families,
\begin{align}\label{eq: interval p-value}
  p_i=q_\delta(z_i)=1-\Phi\left (\left| \frac{z_i}{\sigma_i}\right| +\delta\right) + \Phi\left(-\left| \frac{z_i}{\sigma_i}\right| +\delta\right).
\end{align}

Similarly to the two-sided case, there are total four possibilities, from the sign of $z_i$ and $b_i$. Since the distribution of $z$ is not necessarily symmetric under the null, we cannot reveal $\sgn(z_i)(-1)^{b_i}$ as we did when testing two sided null hypotheses. Therefore we have four possible unknown values to impute.

Define $b_i'=\mathbbm{1}\{z_i>0\}$. To compute $w_{ikb}$ in equation~\eqref{eq: w ika computation}, the sum is now over $4$ total values of $b_i$ and $b_i'$. 

Similar to Appendix~\ref{app: one sided p-value Probability conditioned on class}, we would like to evaluate
\[\PP[b_i=b,b_i'=b',m_i\mid \gamma_i=k]\]
for the interval null case. As an example, assume $b=1$ and $b'=0$. 
\begin{align*}
  \PP[b_i=1,b_i'=0,m_i\in (p_i\pm dp) \mid \gamma =k]&=\PP[p_i\in (g^{-1}(p_i\pm dp),b_i=1,b_i'=0, \mid \gamma = k]\\
  &=\PP[p_i\in (p_{i,b}\pm \zeta dp),z_i\le 0 \mid \gamma = k]\\
  &=\zeta dp f_k(-q_r^{-1}(p_{i,b})) \frac{d}{dp} \left[-q_r^{-1}( p_{i,b})\right]\\
  &=\zeta dp\frac{f_k(-z_{i,b})}{ | q_r'(-q_r^{-1}( p_{i,b}))| }\\
  &=\zeta dp\frac{f_k(-z_{i,b})}{| q_r'(-z_{i,b})| }
\end{align*}

Taking the derivative of equation~\eqref{eq: interval p-value},
\begin{align}\label{eq: interval null probability p-value}
  \PP[b_i=b,b_i'=b',m_i\mid \gamma =k] &\propto \frac{\phi( z_{i,b}; \mu_k,\tau_k^2+\sigma_i^2) \zeta^{\mathbbm{1}\{b=1\}}}{\left(\phi( -| z_{i,b} | +r; 0,\sigma_i^2)-\phi(| z_{i,b}| +r; 0,\sigma_i^2)\right)}.
\end{align} 

\section{Case Studies from \citet{Korthauer}}\label{app: case study data sets}
\citet{Korthauer} compare empirical power on a wide variety of computational biology data sets, using $\alpha=0.05$ for all experiments. For further details and sources for the data, we direct the reader to \cite{Korthauer} and their additional files.

\textbf{ChIP-seq:}\\
Two chromatin immunoprecipitation sequencing (ChIP-seq) data sets were used, testing differential binding analyses between cell lines. The covariate is the mean read depth, the average coverage for the region. 

\textbf{GSEA:}\\
\cite{Korthauer} applied gene set enrichment analysis (GSEA) on two RNA-seq data sets, testing changes in gene expression. The covariate is the size of the gene set.

\textbf{GWAS:}\\
For the genome-wide association studies, the covariates are the sample size of the variant and minor allele frequency, the frequency of the second most common allele in a population.

\textbf{Microbiome:}\\
\cite{Korthauer} explore a variety of differential abundance analyses and correlation analyses. The differential abundance analyses use a Wilcoxon rank sum test on the opterational taxonomic units (OTUs), whereas the correlation analyses use a Spearman correlation test between OTUs and pH, Al, and SO$_4$, with the ubiquity (percentage of samples with the OTU) and mean nonzero abundance as covariates. 

\textbf{RNA-seq:}\\
The two RNA-seq data sets are samples from the GTEx project and an experiment with microRNA mir200c, testing differential expression with the mean gene expression level as covariate. 

\textbf{scRNA-seq:}\\
Lastly, the two single cell RNA-seq data sets use three methods for differential expression analyses, scDD \citep{scDD}, MAST \citep{MAST}, and the Wilcoxon rank-sum test. These three methods were used in conjunction with two covariates, the mean nonzero gene expression and detection rate of the gene.

\section{Empirical studies from \citet{AdaFDR}} \label{subsec: AdaFDR Experiments}
\citet{AdaFDR} also evaluate the performance of several of the methods in Table~\ref{table: FDR methods}, in the paper where they propose AdaFDR. We reproduce these case studies, adding \gmm and AdaPT-GLM$_g$. The experimental results are shown in Table~\ref{table: AdaFDR Experiments} and employ the following data sets.

\textbf{GTEx:}\\
The genotype-tissue expression (GTEx) data set comprises of expression quantitative trait loci (eQTL) (\cite{GTEx}), testing the association between a single nucleotide polymorphisms (SNP) and an eQTL. \cite{AdaFDR} use two sets of adipose tissue, adipose subcutaneous and adipose visceral omentum tissue. The four covariates for these experiments are the distance between SNPs and the gene transcription start site, the log gene expression level, the alternative allele frequency, and the chromatin state of the SNP. 
For computation purposes, the authors use a smaller version of the full data set comprising of 300,000 associations. The authors chose an FDR level of $\alpha=0.01$.

\textbf{RNA-Seq:}\\
The second set are RNA-Seq data sets, specifically the Bottomly \citep{Bottomly}, Pasilla \citep{Pasilla}, and Airway \citep{Airway} data sets, also used by \citep{AdaPT,IHWignatiadis}. We are interested in estimating differential expression significance, with the logarithm of the normalized counts as the single covariate. The authors choose an FDR level of $\alpha=0.1$.

\textbf{Microbiome:}\\
The microbiome experiments (\cite{Microbiome}) are the same as the enigma microbiome experiments from \cite{Korthauer}, except both covariates are used together. For these experiments $n\approx 4000$, testing correlation between operational taxonomic units (OTUs) and the pH and Al. The covariates are the ubiquity (percentage of samples with the OTU) and mean nonzero abundance. The authors chose an FDR level of $\alpha=0.2$.

\textbf{Proteomics:}\\
The proteomics data set is consists of comparing yeast cells treated with rapamycin and dimethyl sulfoxide, testing the differential abundance of $n=2666$ proteins using Welch's t-test, with the number of peptides as the covariate (\cite{Proteomics}). The authors chose an FDR level of $\alpha=0.1$.

\textbf{fMRI:}\\
The two functional magnetic resonance imagining (fMRI) data sets test response to stimulus for voxels in the human brain (\cite{fMRI}). In the auditory data set, a participant is given auditory stimulus, and in the imagination data set, the participant is tasked to imagine playing tennis. The four covariates are the categorical Brodmann area label (\cite{Brodmann}) and the spatial $(x,y,z)$ location of the voxel. The authors chose an FDR level of $\alpha=0.1$.
{
\renewcommand{\arraystretch}{1.5}
  \begin{table}[h]
    \begin{threeparttable}
  \small
  \caption{AdaFDR Experiments}
  \label{table: AdaFDR Experiments}
  \centering
    \begin{tabular}{l c c c c c c H c c }
    \toprule
    Data set  & BH & SBH& AdaPT & IHW & BL & AdaFDR &\gmm Old & \adaptg &\gmm\\
      \midrule
      GTEx: Subcutaneous & $1182$ & $1188$ & $1333$ & $1333$ & $1185$ & $\mathbf{1469}$&$1302$ & $1188$& $1279$\\
      GTEx: Omentum & $549$ & $553$ & $1037$ & $724$ & $558$ & $\mathbf{1360}$&$635$ & $567$&$707$\\
      RNA-Seq: Bottomly \tnote{*} &$1583$ &$1693$ &$2109$ &$1714$ &$\mathbf{2347}$ &$2144$ & $2151$& $ 2167$&2142\\
      RNA-Seq: Pasilla  & $687$& $687$&$853$ &$785$ &$740$ &$856$ & $\mathbf{857}$ &$ 708$ &$\mathbf{879}$\\
      RNA-Seq: Airway \tnote{*}
        & $4079$&$4079$ &$6045$ &$4862$ &$4792$ &$\mathbf{6050}$ & $5755$&$ 5747$ &$5731$\\
      Microbiome: enigma\_ph & $61$& $65$&$96$ &$90$ &$104$ &$124$ & $126$ & $ 110$ &$\mathbf{169}$\\
      Microbiome: enigma\_al & $206$&$437$ &$496$ &$283$ &$460$ &$\mathbf{503}$ & $517$ & $ 470$ &$478$\\
      Proteomics & $244$&$358$ &$384$ &$245$ &$406$ &$\mathbf{409}$ & $394$ & $381$ & $387$\\
      fMRI: Auditory & $888$ & $888$ & - & $1015$ & $889$&$1045$ &$\mathbf{1047}$ & $956$ &$\mathbf{1125}$\\
      fMRI: Imagination & $2141$ & $2228$ & - & $2151$ & $2143$&$2237$ &$\mathbf{3055}$ & $2938$&$\mathbf{2982}$\\
      \bottomrule
    \end{tabular}
    \begin{tablenotes}
    \item[*] In the bottomly and airway data sets, the $p$-value distribution has various spikes, specifically one close to $0.9$. This null $p$-value distribution violates the (super)-uniform assumption, therefore we reselect random $p$-values in the blue region from a $\mathrm{Unif}[\lambda,\nu]$ distribution.
    \end{tablenotes}
    \end{threeparttable}
  \end{table}
 }

  In Table~\ref{table: AdaFDR Experiments}, we recreate the case study experiments in \cite{AdaFDR}. We see relatively similar performance between \gmm, \adaptg, AdaPT, and AdaFDR. However, AdaFDR only provides asymptotic FDP control, and in our logistic experiments, we found that AdaFDR may violate FDR control for larger values of $\alpha$.

\section{Supplementary Figures} \label{app: full case studies}

Figure~\ref{fig: logistic p-value dist} is comprised of the $p$-value distributions for the one-sided, two-sided, and interval null hypotheses in the logistic simulation.

\begin{figure}[t]
\centering
\begin{subfigure}{.33\textwidth}
  \centering
\includegraphics[width=0.98\columnwidth]{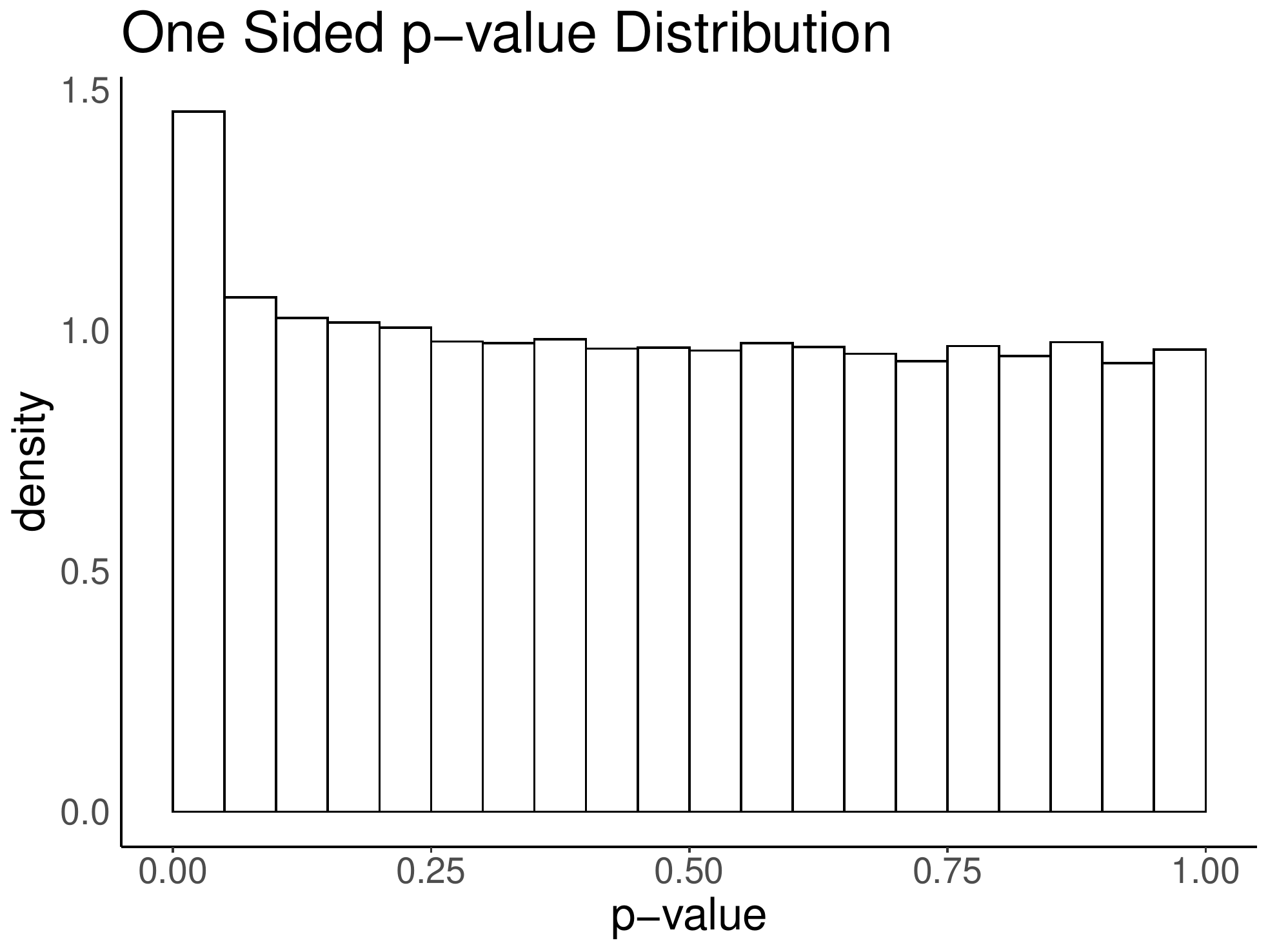}
\end{subfigure}\hfill
\begin{subfigure}{.33\textwidth}
  \centering
\includegraphics[width=0.98\columnwidth]{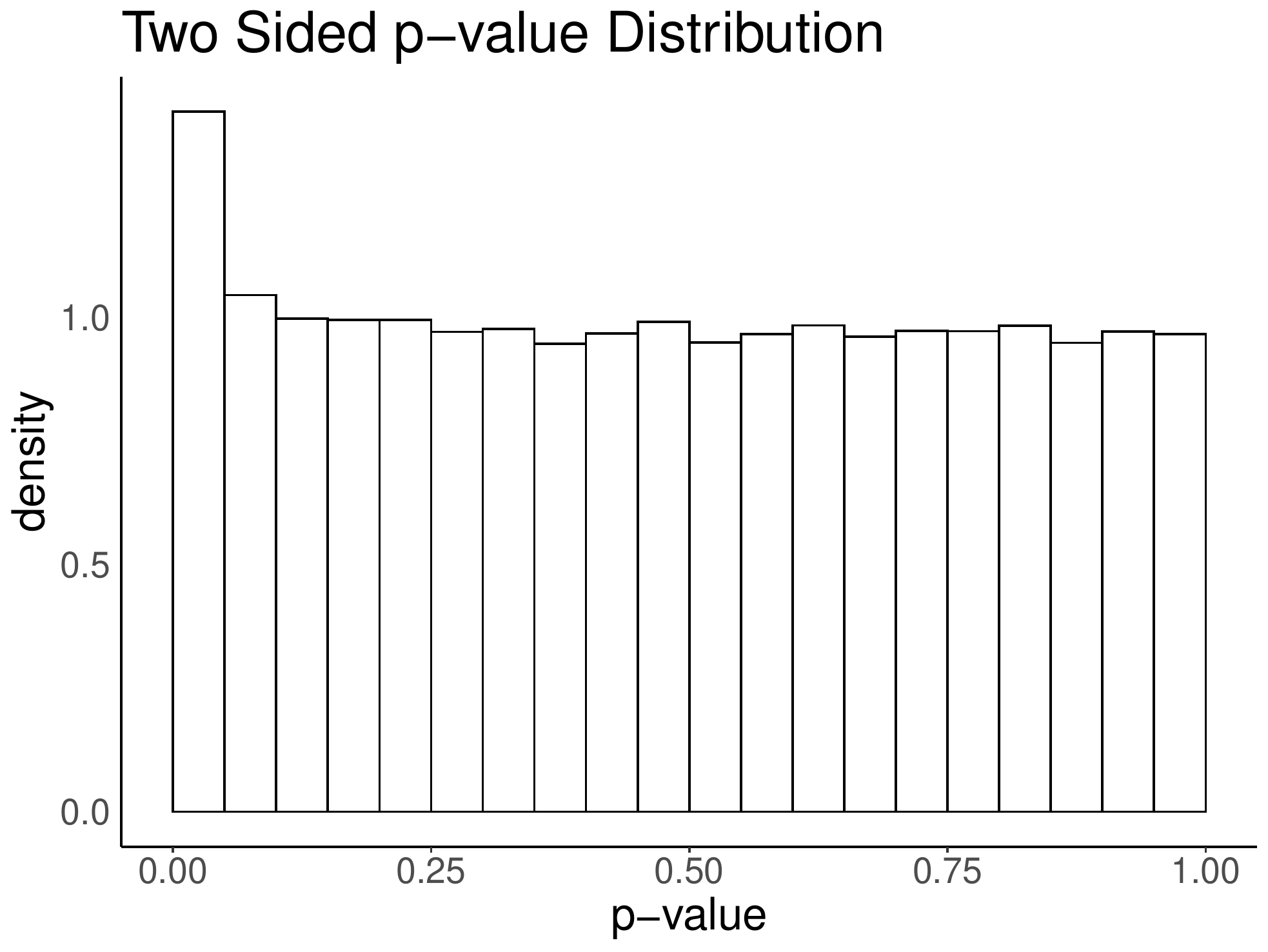}
\end{subfigure}\hfill
\begin{subfigure}{.33\textwidth}
  \centering
  \includegraphics[width=0.98\columnwidth]{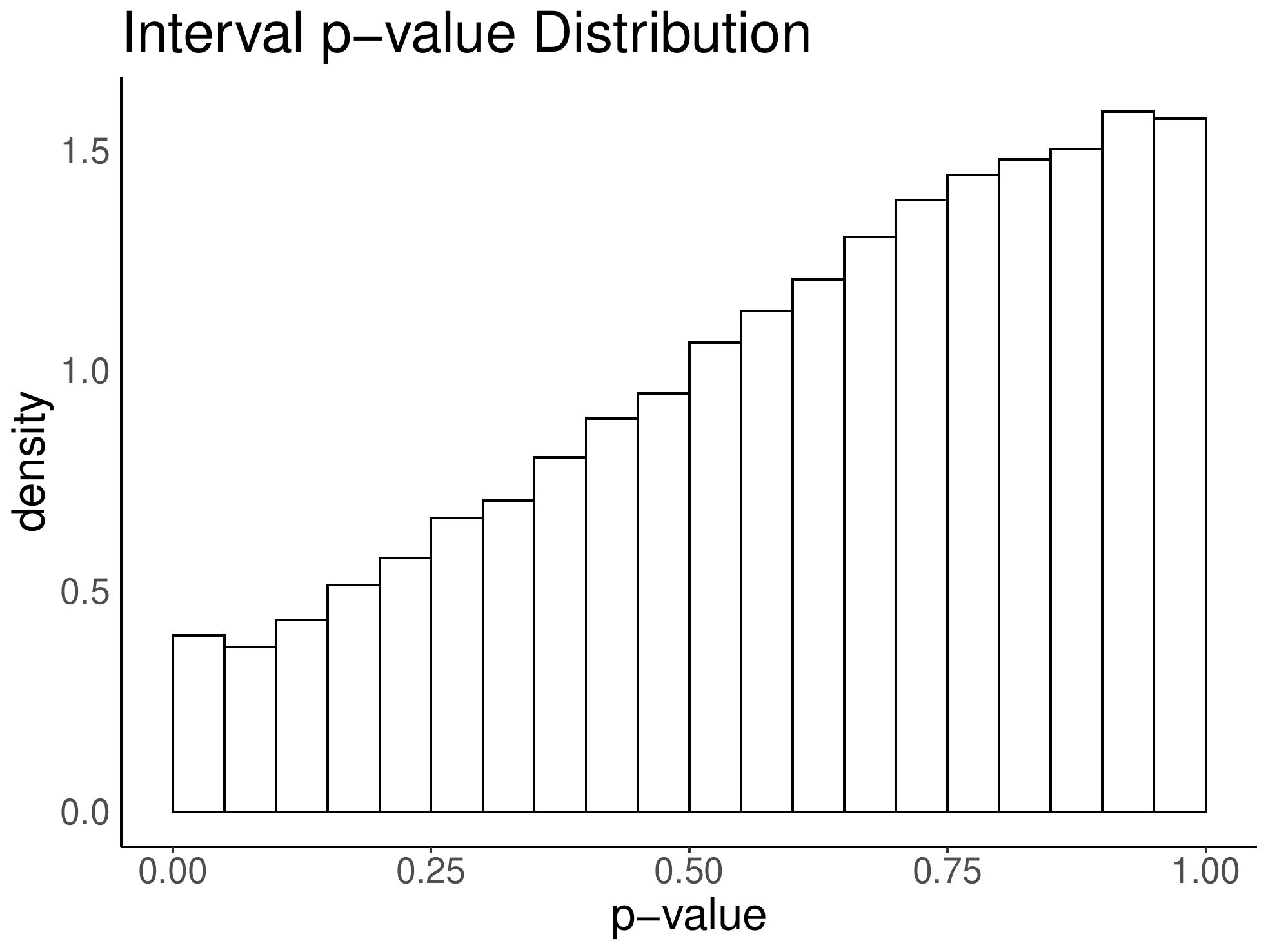}
\end{subfigure}
\caption{Distribution of $p$-values under one-sided, two-sided, and interval null testing for our logistic simulations.}
\label{fig: logistic p-value dist}
\end{figure}

\subsubsection{FDR Simulations}\label{subsec: FDR Simulations}
\begin{figure}[h]
\centering
  \includegraphics[width=\columnwidth]{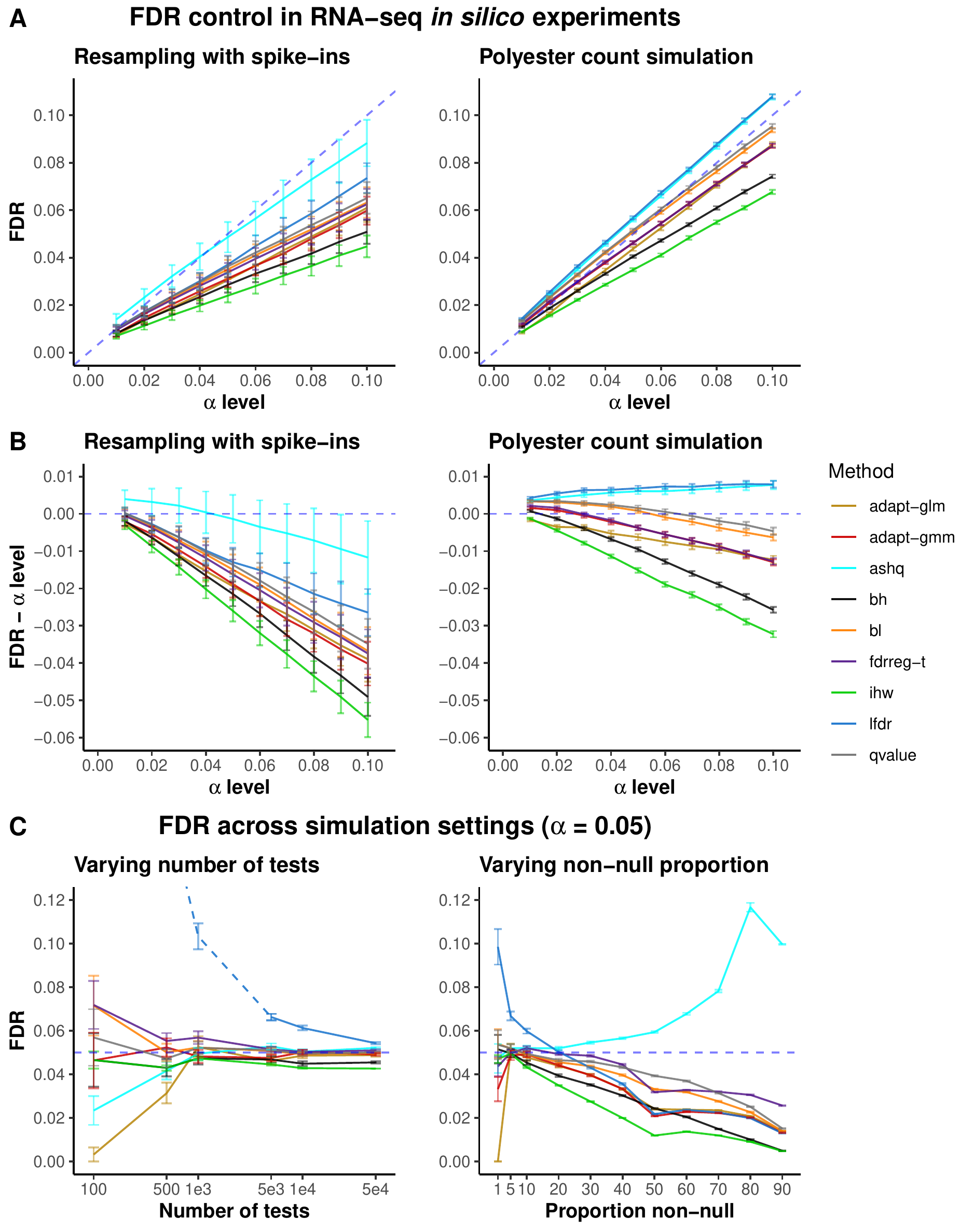}
\caption{Mean FDR and standard error over $100$ replications. $\textbf{A:}$ FDR level for various alpha levels for differential expression experiments.  $\textbf{B:}$ FDR - $\alpha$ level for greater clarity. Values above the dotted line represent violations of FDR control. $\textbf{C:}$ FDR at $\alpha=0.05$ across various number of tests and proportion of non-null hypotheses. The LFDR method is a dotted line for situations with the number of tests per bin is less than $200$ due to a generated warning.}
\label{fig: FDR experiments}
\end{figure}

\cite{Korthauer} perform simulations using yeast in silico spike-in data sets as well as simulated RNA-seq data from the \texttt{polyester} R package \citep{Polyester}. For the yeast in silico experiments in Figure~\ref{fig: FDR experiments}, $30\%$ of the genes are differentially expressed, belonging to the alternative, with a strongly informative covariate.

In Figures~\ref{fig: FDR experiments}A and B, we plot the FDR and FDR minus the $\alpha$ level, averaged over 100 replications. Points above the dotted blue line at $0$ represent violations of FDR control, while points below the line represent conservative procedures. In Figure~\ref{fig: FDR experiments}C, we analyze how the FDR differs with respect to the number of tests and the proportion of non-null hypotheses.

\gmm, in red, performs very similar to AdaPT in terms of FDR control, consistently below the desired $\alpha$ level. We find that ASH, LFDR, FDRreg violate FDR control in a various situations, in particular in small sample situations and extreme proportions of non-null hypotheses. We choose to exclude ASH, LFDR, and FDRreg methods in future case study experiments due to inconsistent FDR control.

\subsubsection{TPR Simulations}
\begin{figure}
\centering
  \includegraphics[width=\columnwidth]{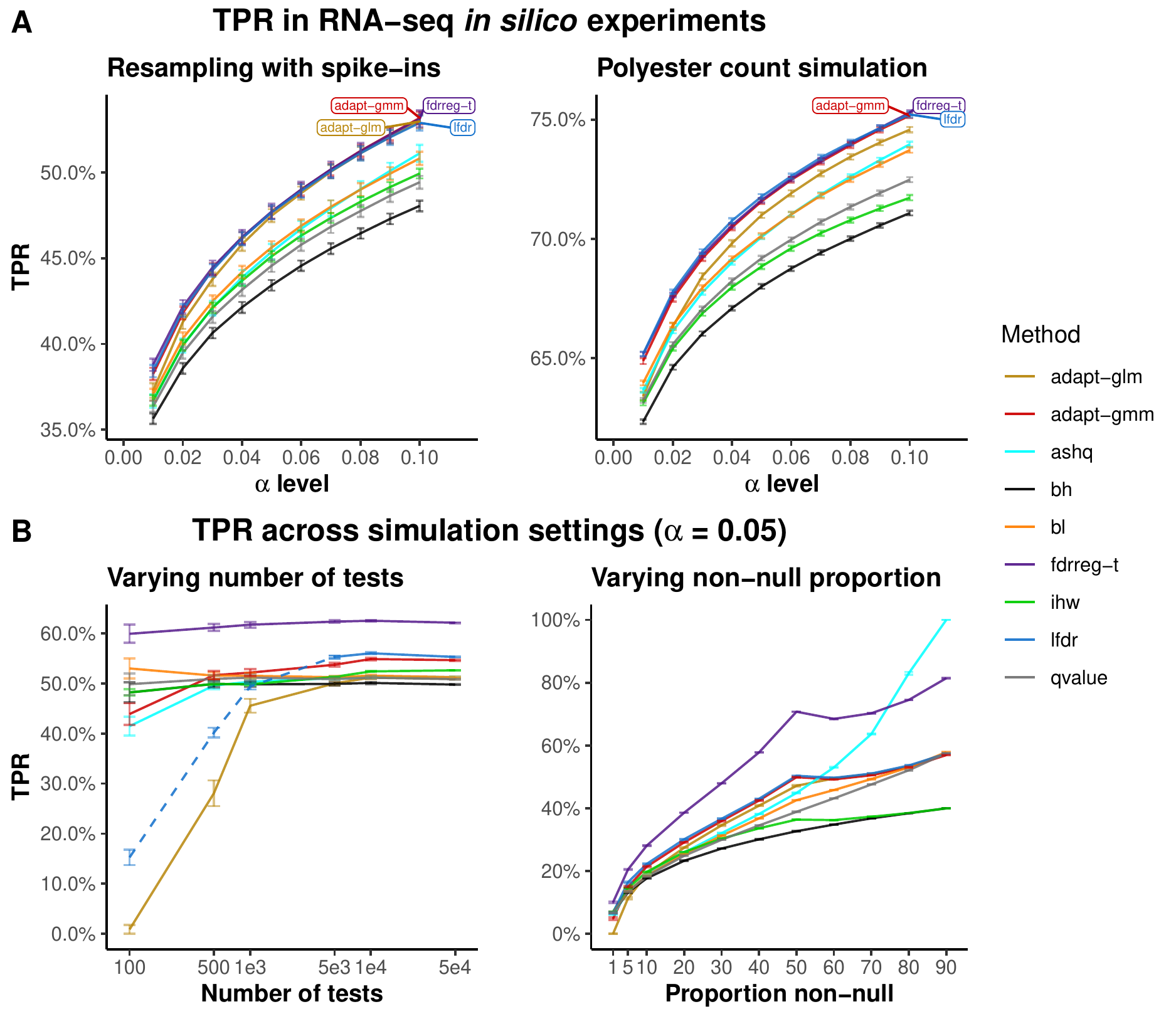}
\caption{Mean TPR and standard error over $100$ replications.  $\textbf{A:}$ TPR for various $\alpha$ levels for differential expression experiments. $\textbf{B:}$ TPR at $\alpha=0.05$ across various number of tests and proportion of non-null hypotheses. The LFDR method is a dotted line for situations with the number of tests per bin is less than $200$ due to a generated warning.}
\label{fig: TPR experiments}
\end{figure}
\cite{Korthauer} also compare power in their yeast in silico and RNA-seq polyester simulations, summarized in Figure~\ref{fig: TPR experiments}. The true positive rate (TPR), the proportion of non-null hypotheses that are rejected, is plotted on the y-axis. All methods provide improvements ($\approx$ 5-10\%) in TPR compared to the baselines of BH and Storey's q-values.

In Figure~\ref{fig: TPR experiments} we see that AdaPT, \gmm, FDRreg, and LFDR achieve the greatest TPR. However, as mentioned in the FDR experiments in Figure~\ref{fig: FDR experiments}, FDRreg, LFDR, and ASH are shown to violate FDR control.

In the left panel of Figure~\ref{fig: TPR experiments}B, we begin to see substantial differences between AdaPT and \gmm. In low sample regimes, AdaPT suffers from very low power, an example of the shortcoming from Section~\ref{subsec: AdaPT Shortcomings}. With the adaptive masking function of \gmm, we maintain a competitive TPR even for a small number of tests. 

Overall, we see that \gmm, FDRreg, and LFDR achieve the greatest TPR, and  \gmm consistently outperforms AdaPT with the new masking function and Gaussian mixture model.

\subsection{Summary Metrics}\label{subsec: Summary Metrics}
\begin{figure}
\centering
  \includegraphics[width=\columnwidth]{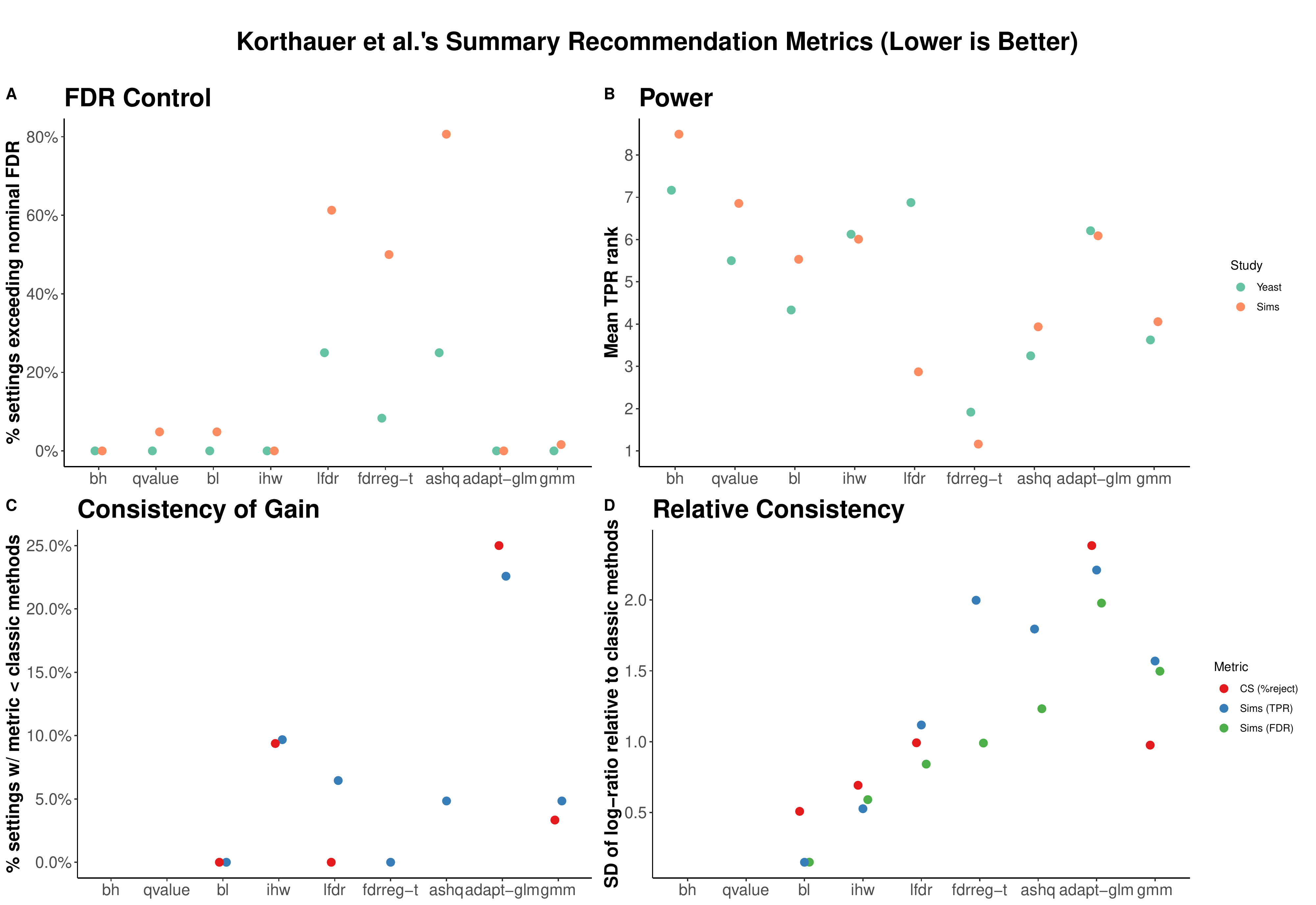}
\caption{Summary metrics for FDR control, power, and consistency for each method, lower values are more desirable for all plots (CS = case studies, Sims = simulations). \textbf{A:} The percentage of settings where FDR control is violated. \textbf{B:} The power ranking relative to the other methods. \textbf{C:} Percentage of settings where the method performs worse than the baselines of BH and Storey's q-value. \textbf{D:} The standard deviation of the log of power between the method and the better baseline between BH and Storey's q-value.}
\label{fig: Summary Metrics}
\end{figure}
To summarize all of the results from the simulations and case studies, \cite{Korthauer} construct aggregate metrics for FDR, TPR, and consistency relative to the classical methods. In all four plots, lower values are more desirable.

Figure~\ref{fig: Summary Metrics}A is a plot of the percentage of situations where the true FDP exceeds the nominal FDR. We see explicitly that LFDR, FDRreg, and ASH often violate FDR control, up to $80\%$ of the time, while AdaPT and \gmm rarely do so. 

Figure~\ref{fig: Summary Metrics}B is a plot of the mean TPR rank, aggregated over the $9$ methods, where lower ranks imply more powerful methods. We see that \gmm is more powerful than AdaPT, and is the most powerful among methods that control FDR.

In figures~\ref{fig: Summary Metrics}C and D, \cite{Korthauer} evaluate performance over various case studies (CS) and simulations (sims). Figure~\ref{fig: Summary Metrics}C is a plot of the percentage of situations where the TPR/rejection percentage is inferior to baseline methods, BH and Storey's q-value. Figure~\ref{fig: Summary Metrics}D is a plot of the standard deviation of the log ratio of power between the modern method and baseline. In these two figures, lower values imply greater consistency and improvements relative to the baseline methods. We note that AdaPT suffers as it may underperform BH and Storey's q-value if the number of tests is small or if an intercept only model is omitted. We observe that \gmm is more consistent than AdaPT, as it rarely underperforms the classic methods (0-5\% of the time) and has lower variation in performance.

\begin{landscape}
\begin{figure}
\centering
  \includegraphics[width=\columnwidth]{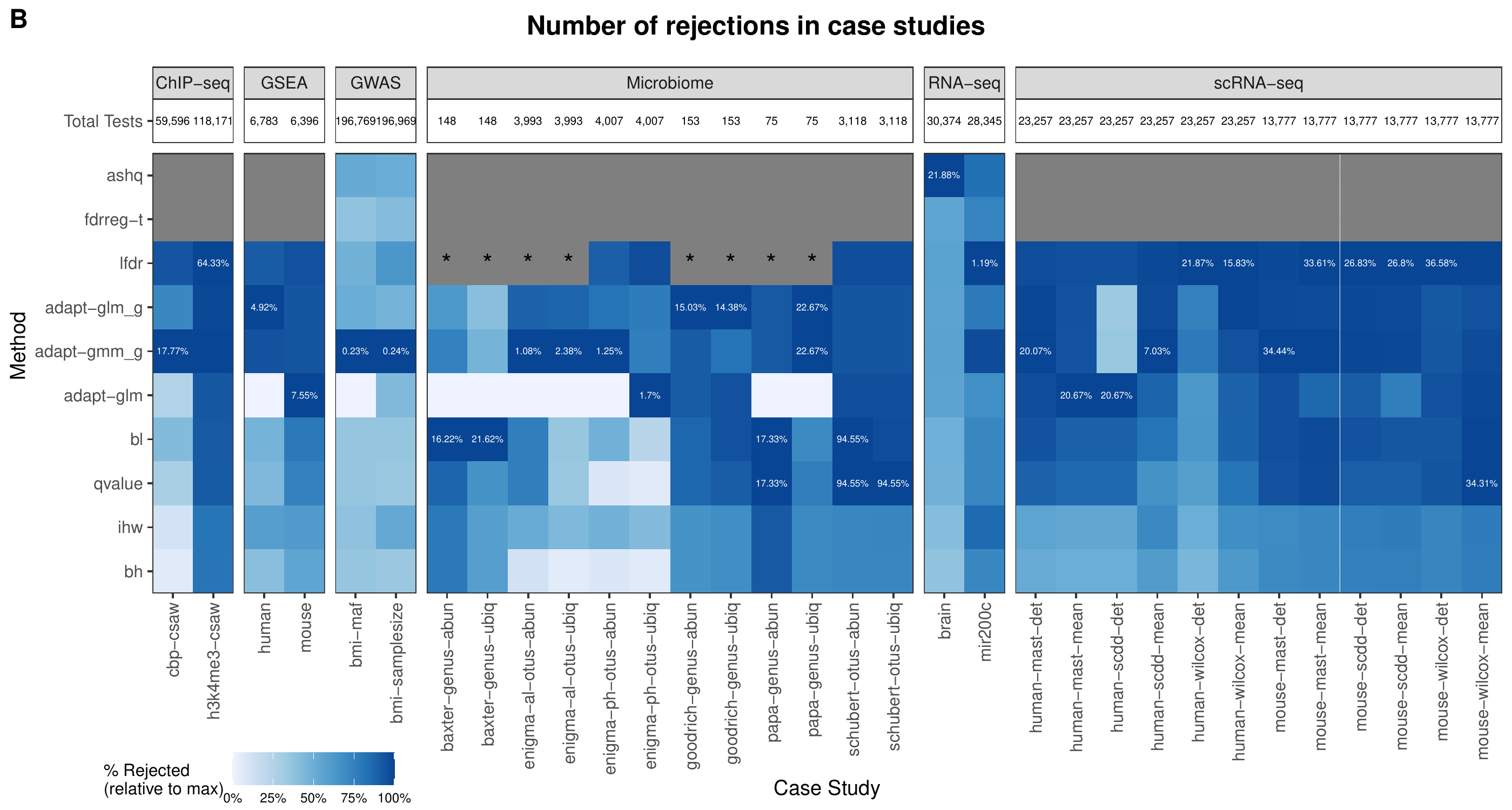}
  \caption{ Table of power in various case studies. Each row corresponds to an FDR procedure and each column corresponds to a case study and covariate. Darker squares correspond to more rejections and squares with white text correspond to the maximum rejection percentage within the column. For ASH and FDRreg, they cannot be applied in many circumstances due constraints on input, for example ASH requires the effect size and standard error. The squares with asterisks correspond to situations where LFDR is not applicable due to small sample sizes. We see that LFDR is the most powerful in many situations and \gmm provides a competitive alternative, however we stress that ASH, FDRreg, and LFDR do not control finite sample FDR as shown in Figures~\ref{fig: FDR experiments} and~\ref{fig: Summary Metrics}.
 }
\label{fig: Case Study experiments Full}
\end{figure}
\end{landscape}

\end{document}